\documentclass[11pt]{article}
\usepackage[top=1in, bottom=1in, left=1in, right=1in]{geometry}
\usepackage{booktabs} 
\usepackage[ruled]{algorithm2e} 

\SetAlFnt{\small}
\SetAlCapFnt{\small}
\SetAlCapNameFnt{\small}
\SetAlCapHSkip{0pt}
\IncMargin{-\parindent}

\usepackage{amsmath}
\usepackage{amsthm}
\usepackage{amsfonts}
\usepackage{amssymb}
\usepackage{algpseudocode}
\usepackage{mathtools}
\usepackage{bm}
\usepackage{tikz}
\usetikzlibrary{positioning}

\newtheorem{lemma}{Lemma}
\newtheorem{theorem}{Theorem}
\newtheorem{prop}{Proposition}

\newtheorem{cor}{Corollary}
\newtheorem{definition}{Definition}

\renewcommand{\bar}{\overline}
\renewcommand{\hat}{\widehat}
\renewcommand{\tilde}{\widetilde}

\DeclareMathOperator*{\argmax}{arg\,max}

\newcommand{\reals}{\mathbb{R}}

\newcommand{\ab}{\alpha}
\newcommand{\bmax}{\Delta^{max}_{b}}
\newcommand{\gammai}{\gamma_i}
\newcommand{\gammaj}{\gamma_j}

\newcommand{\udi}{u^e_i}
\newcommand{\udihat}{\bar{u}^e_{i}}

\newcommand{\Udi}{U^e_i}

\newcommand{\ehat}{\hat{e}}
\newcommand{\Ehat}{\hat{E}}

\newcommand{\cit}{c_t^{i}}

\newcommand{\Oh}{\mathcal{O}}

\newcommand{\T}{\mathcal{T}}

\newcommand{\Es}{\mathcal{E}}
\newcommand{\I}{\mathcal{I}}
\DeclareMathOperator{\E}{\mathbb{E}}
\DeclareMathOperator{\Pb}{\mathbb{P}}

\algdef{SE}[SUBALG]{Indent}{EndIndent}{}{\algorithmicend\ }%
\algtext*{Indent}
\algtext*{EndIndent}

\usepackage[numbers]{natbib}
\bibliographystyle{unsrtnat}
\setcitestyle{number}
\newcount\Comments  
\Comments=1 
\usepackage{color} 
\newcommand{\kibitz}[2]{\ifnum\Comments=1\textcolor{#1}{#2}\fi}

\newcommand{\squishlist}{
   \begin{list}{$\bullet$}
    { \setlength{\itemsep}{0pt}      \setlength{\parsep}{3pt}
      \setlength{\topsep}{3pt}       \setlength{\partopsep}{0pt}
      \setlength{\leftmargin}{1.5em} \setlength{\labelwidth}{1em}
      \setlength{\labelsep}{0.5em} } }
\newcommand{\squishend}{  \end{list}  }



\begin{document}
\title{Equilibrium of Data Markets with Externality}
\author{        
        Safwan Hossain \\ 
        Harvard University \\ 
        \texttt{shossain@g.harvard.edu} 
        \and 
        Yiling Chen \\ 
        Harvard University \\
        \texttt{yiling@seas.harvard.edu}
}
\date{}
\maketitle

\begin{abstract}
    We model real-world data markets, where sellers post fixed prices and buyers are free to purchase from any set of sellers, as a simultaneous game. A key component here is the negative externality buyers induce on one another due to data purchases. Starting with a simple setting where buyers know their valuations a priori, we characterize both the existence and welfare properties of the pure Nash equilibrium in the presence of such externality. While the outcomes are bleak without any intervention, mirroring the limitations of current data markets, we prove that for a standard class of externality functions, platforms intervening through a transaction cost can lead to a pure equilibrium with strong welfare guarantees. We next consider a more realistic setting where buyers learn their valuations over time through market interactions. Our intervention is feasible here as well, and we consider learning algorithms to achieve low regret concerning both individual and cumulative utility metrics. Lastly, we analyze the promises of this intervention under a much richer externality model.
\end{abstract}

\section{Introduction}
Data plays a central role in machine learning, and demand for it has grown substantially due to the increasing value it provides. It has thus become the subject of trading, and understanding markets for data (or more generally, information) has gained traction in research communities in recent years \citep{bergemann2018design, babaioff2012optimal, bergemann2019markets, mehta2019sell, agarwal2019marketplace, agarwal2020towards}. Such works have largely focused on determining the ``right'' price for data so that it is allocated to those who value it most \citep{bergemann2019markets,mehta2019sell,agarwal2019marketplace, agarwal2020towards}.
The mechanisms proposed here for two-sided markets are usually auction-based, and 
require buyers to report their valuations for data sources \citep{agarwal2019marketplace, agarwal2020towards}. These requirements are challenging, if not impossible, to achieve in practice due to the peculiar characteristics of data products. Buyers are generally free to purchase from any combination of data sources, and it is impractical to elicit accurate valuations in this combinatorially large space. More importantly, a data source's value to a buyer largely arises from how it can improve model performance, a metric the buyer herself may not know before making purchases and evaluating the results. Thus, she acts in the market with only partial or incomplete information. 
It is hence not surprising that most real-world data market platforms, such as Snowflake Data Marketplace\footnote{https://www.snowflake.com/} and AWS Data Exchange\footnote{https://docs.aws.amazon.com/data-exchange/index.html}, take a very simple format: sellers use some pricing mechanism to set a price for each dataset, and buyers can freely choose which datasets to purchase, with platforms charging a simple transaction fee. While sellers can change prices over time to reflect market changes, the granularity of this is quite coarse, and prices stay stable or \emph{fixed} for some time. The simplicity of fixed-price data markets is not just a huge operational advantage but also arguably necessary in light of the challenges data products pose. Such markets are the focus of our paper.

An important component of modeling data markets is accounting for the (usually negative) externality one buyer's purchase decision has on another. In a competitive setting, a buyer's value for data is predicated on the relative advantage it provides, with respect to their peers. Equivalently, the value of data can depend on what others have access to. While externality is not unique to data and is present for other products as well, it is especially prominent in data markets due to another salient feature: data can be replicated at a mass scale with zero marginal cost and sold to multiple buyers, a phenomenon that exacerbates any externality data products induce. Replication need not even be exact: sellers can offer different versions of the same dataset by injecting noise or by interleaving it with something innocuous \citep{pei2020survey}. Accordingly, real-world data markets usually give no guarantee of unique ownership. Data externality is thus a persistent phenomenon, and by linking one buyer's utility to another's decisions, it turns the buyer interactions into a game. We thus model fixed-price data markets between arbitrary buyers and sellers in the presence of externality and free replicability as a game. Despite their simplicity, which sidesteps many of the above concerns, this model has not been formally analyzed in literature and its properties are unknown. By leveraging the formalisms of game theory, we systematically explore this landscape and ask:
\begin{itemize}
    \item How well can fixed-price markets serve the purpose of allocating data to buyers, especially with respect to social welfare? 
    \item How does buyer externality affect their strategy and market performance?
    \item Can simple platform interventions improve the outcome of fixed-price data markets? 
    \item If buyers learn their value for data sources by repeatedly interacting with a data market, can they learn to act optimally, and how much social welfare can be achieved? 
\end{itemize}

\subsection{Our contributions} 
We model the interactions between buyers in fixed-price data markets as a simultaneous game, with buyer utility depending on the net gain derived from the purchased data, the externality they suffer due to others' actions, and possibly some cost administered by the platform. We formally define this in section \ref{section:model}. In section \ref{section:known_valuations}, we study the pure Nash equilibrium (PNE) of this game and show that without any intervention, this equilibrium has poor social welfare, confirming the present concerns with data markets. A market intervention in the form of a transaction cost is then proposed that significantly improves the situation by guaranteeing a dominant strategy NE whose social welfare approaches the optimal. Section \ref{section:learned_valuations} then considers the real-world setting wherein buyers do not know valuations a priori but instead must learn them through repeated market interactions. We model the buyer's learning problem as a multi-armed bandit instance with exponentially many arms, show the proposed platform intervention is feasible here as well, and consider online learning algorithms for buyers in this challenging combinatorial setting. Section \ref{section:rich_externality} explores a richer model of buyer externality, wherein not intervening does not guarantee even a reasonable approximate equilibrium. In contrast, our proposed intervention significantly improves this, guaranteeing a $\varepsilon-$pure NE with good welfare. Nonetheless, this is a more challenging model for the learning setting, and we discuss these along with broader extensions in section \ref{sec:discussion}.

\section{Related Works}\label{section:related_works}



\citet{bergemann2019markets} provide a broad overview of the growing literature on markets for information and data. Works here examine how to optimally sell information according to some objectives, and take a mechanism design approach, with a focus on the seller's pricing problem. For instance, many works consider a monopoly information holder directly selling a private random signal to buyers \citep{esHo2007price,bergemann2018design,babaioff2012optimal,chen2020selling,cai2020sell,bonatti2022selling}. The seller decides on a menu of information products (e.g. partially revealing the signal) and an associated price for each product to maximize her profit. \citet{mehta2019sell} also study selling a dataset in a monopoly seller setting. When there are multiple sellers and multiple buyers, auction-based mechanisms have also been leveraged to design two-sided data marketplaces \citep{agarwal2019marketplace, agarwal2020towards}. In such works, the marketplaces themselves are not profit-driven but intend to maximally facilitate the matching of data to buyers. \citet{bergemann2019markets} and \citet{BergemannSocialData2022} also study profit-driven data intermediaries who make a bilateral deal to purchase information from data holders and then sell the information to data buyers. Here, the data holders are consumers, and the buyers are firms that use purchased information to price discriminate.  

While determining how to price data is fundamentally important, the proposed mechanisms so far have not found their way into real-world data markets. A menu of partially-revealing information products can be too complex and cumbersome for buyers. As discussed in the introduction, buyers are unlikely to know their valuation for data a priori, rendering mechanisms that solicit buyer valuations (e.g. auctions) impractical. Our work thus takes an orthogonal direction and steps away from the design question of how to sell data and consider data sellers participating in a real-world fixed-price data market. We take this market mechanism as a given and adopt a game-theoretic approach to understanding buyer behavior and dynamics in such markets. 

A key feature of data markets that is central to our model is buyer externality. This has been highlighted in studying monopoly data selling \citep{ADMATI1986400, Admati1990, Bimpikis2019, bonatti2022selling} and data auctions \citep{agarwal2020towards} in competitive environments. These works explicitly model downstream competition (e.g. trading in financial markets) among data buyers and the negative buyer externality that arises therein. Our work abstracts away the specifics of competitive environments to present such externality in a general sense. Another type of externality present in information markets is the externality among data sellers. Naturally, the value of one's data decreases when others decide to sell similar data \citep{BergemannSocialData2022}. Seller externality is an important phenomenon to consider, which we leave for future work.

Lastly, our work on the online setting relaxes the assumption that buyers know their valuations a priori. This spiritually parallels studying learning agents in other mechanisms such as auctions \cite{Blum2003,DBLP:journals/corr/WeedPR15,braverman_selling_2018} and peer prediction \cite{feng2022peer}. The techniques used in the online section leverage the vast literature on bandits, particularly bandits in metric spaces \citep{kleinberg2019bandits, bubeck2008online}. We view studying markets with learning agents as a step toward a more realistic evaluation of market performance.   

\section{Model}\label{section:model}

\paragraph{\textbf{Market Structure:}} Consider $n$ buyers, $\mathcal{D} = \{d_1,$ $\dots, d_n\}$, and $k$ sellers, $\mathcal{L} = \{\ell_1, \dots \ell_k \}$, who without loss of generality have one dataset each to sell. We consider a market where buyers are free to purchase from any subset of the sellers, each of whom posts a fixed price for their dataset and cannot refuse to sell. The data offered by the sellers can be arbitrarily correlated, and multiple buyers may buy from the same seller. Let $\Gamma$ denote the power set of the sellers, and let $\gamma \in \Gamma$ denote a specific subset of sellers (we often refer to it as a \emph{seller set} or \emph{order}). We use subscripts to distinguish between seller sets chosen by various buyers - for example, $\gammai$ refers to the set of buyers chosen by buyer $i$ - and superscripts to distinguish or index between two seller sets irrespective of buyers - for example, $\gamma^m$ and $\gamma^n$ refers to two distinct seller sets. We consider buyers simultaneously submitting their orders and use $S = (\gamma_1, \dots, \gamma_n)$ to denote the purchase orders of all buyers.  


\paragraph{\textbf{Buyer's Utility:}} The utility of buyer $i$ when purchasing from a seller set $\gamma$ is predicated on: (1) the increased value/performance due to purchased data, denoted by the random variable $P_i(\gamma)$ with $\E[P_i(\gamma)] \triangleq p_i(\gamma)$, (2) the cost charged by sellers in $\gamma$, $c(\gamma)$, (3) the negative externality caused by another buyer $j$'s action, denoted by the random variable $E_{ij}(\cdot)$ with $\E[E_{ij}(\cdot)] \triangleq e_{ij}(\cdot)$, and (4) a \emph{transaction cost} $\mathcal{T}_i(S)$ charged by the platform, which specifies the amount that buyer $i$ needs to pay (or receive) when the set of purchase orders is $S$. Without loss of generality, we assume $P_i(\gamma) \in [0,1]$ and $c(\gamma) \in [0,1]$, and the ``buy nothing'' option ($\gamma = \emptyset$) has 0 increased performance and 0 cost. Note a buyer that decides to ``buy nothing'' and not participate in the market, will still suffer the negative externality induced by others. We primarily focus on the most externality model in literature where buyer $i$'s externality from buyer $j$'s action depends arbitrarily on the latter's action: $E_{ij}(\gamma_j)$ \citep{aseff2008optimal, li2019facility, agarwal2020towards}. To ensure that the total externality caused by a buyer is at most $1$ and thus in the same range as performance and cost, we assume that $\forall \, i$, $\sum_{j \ne i}E_{ji}(\gammai) \in [0,1]$. In section \ref{section:rich_externality}, we consider a richer class of externality functions and in appendix \ref{appendix:externality}, we include a detailed discussion on the economic interpretations of these externality models. 
We now formally define the buyer's utility, and for brevity, combine the performance increase $P_i(\gamma)$ and seller costs $c(\gamma)$ terms into a single \emph{net gain} term $G_i(\gamma)$, with $\E[G_i(\gamma)] \triangleq g_i(\gammai)$.

\begin{definition}[Utility and Welfare]
    For seller cost function $c$, and buyer $i$ with performance increase $P_i$, we define her net gain to be $G_i(\gamma) = P_i(\gamma) - c(\gamma)$, with $G_i(\gamma) \in [-1,1]$. Using this, along with externality $E_{ij} \, \forall \, j$, and transaction cost $\mathcal{T}_i$, we define her stochastic utility for a complete order profile $S = (\gamma_1, \dots, \gamma_n)$ to be:
    \begin{equation*}
        U_i(S) =G_i(\gamma_i) - \sum_{j \ne i} E_{ij}(\gamma_j) - \mathcal{T}_i(S).
        \vspace{-1em}
    \end{equation*}
    The expected buyer utility is denoted by $\E[U_i(S)] \triangleq u_i(S)$. We define the expected social welfare of an order profile $S$ as the sum of all expected buyer utility: $sw(S) = \sum_{i}{u_i(S)}$.
\end{definition}



\paragraph{\textbf{Game and Solution Concept:}} 
We model buyers in the data market as playing a simultaneous-move game to maximize their expected utility. We note that agent $i$'s utility, and thus her best response, doesn't depend on any other agent's net gain $G_j(\cdot)$, which may be beneficial. We are primarily interested in analyzing the pure-strategy Nash equilibrium of this game under externality and transaction costs. Beyond the existence of such equilibria, we also aim to compare the social welfare at equilibrium to the optimal social welfare possible. A common notion for comparison is the \emph{price of stability}, the ratio of the optimal social welfare to the welfare of the best equilibrium \citep{nisan2007algorithmic, schulz2003performance}. However, this is a multiplicative metric and thus unamenable to additive notions of regret that are common in the online analysis we do in section \ref{section:learned_valuations}. As such, we define a comparable additive notion called \emph{welfare regret at equilibrium (WRaE)} to characterize the societal cost of self-interested behavior in games. 
\begin{definition}[WRaE]
    Let $S^* = \argmax_{S}{\sum_{i}{\bar{u}_i(S)}}$ be the optimal strategy with respect to social welfare and let $\mathcal{S}^q$ be the set of all equilibrium strategies. We define the \textbf{welfare regret at equilibrium} (WRaE) for our game as: $sw(S^*) - \max_{S \in \mathcal{S}^q}sw(S)$.
\end{definition}

\paragraph{\textbf{Known vs Unknown Utilities:}} 
Buyers may not know their expected net gain for a dataset until using it with their models. Thus, there are two sources of tension in the general setting: buyers behaving strategically with respect to their utility, and buyers learning these by interacting with the market. Although intertwined, the former relates to game dynamics and the latter is a learning theoretic question. To comprehensively study both, in section \ref{section:known_valuations} we start with the simpler case of buyers knowing the expected utility for their actions, which is standard in game theory \citep{prisner2014game, munoz2019pure, mailath1991extensive}, and focus on the resulting equilibrium and welfare properties. The more general online setting wherein buyers learn their valuations through repeated market participation and behave strategically based on these is studied next. Our key metrics here are online effective and online welfare regret (see section \ref{section:learned_valuations} for precise definition); these intuitively capture how well buyers can learn their optimal strategy, and how this learned strategy affects welfare regret at equilibrium.

\section{Data Markets Game with Known Utility}\label{section:known_valuations}
In this section, we consider the simultaneous-move data markets game wherein each buyer knows their expected net gain and externality. This allows us to settle game/market dynamics questions first, before considering the more general setting of buyers playing with learned valuations in section \ref{section:learned_valuations}. Considering this game without any platform intervention -- i.e. buyers simply pay the seller's cost and $\mathcal{T}_i(S) = 0$ -- admits a unique, dominant-strategy equilibrium: each buyer simply chooses the seller set with the highest net gain, since the externality suffered depends on the others' action. However, the welfare regret of this dominant-strategy equilibrium can be maximal. Intuitively, we sketch an instance where the seller set with the highest net gain also induces a high externality.

\begin{prop}\label{thrm:independent_externality_eq}
    For the data markets game with no intervention - $\mathcal{T}_i(S) = 0, \,\forall i$, it is a dominant strategy for any buyer $i$ to select $\gammai^d = \argmax_{\gamma}{g_i(\gamma)}$. However, there exists an instance of this game where the WRaE is maximal - (i.e. $\Theta(n)$ upper and lower bound).
\end{prop}
\begin{proof}
    Note each buyer $i$'s utility under these conditions is: $g_i(\gammai) - \sum_{j\ne i}{e_{ij}(\gammaj)}$. The only aspect of this utility that buyer $i$ can affect is $\gammai$, and thus she has a dominant strategy of choosing the source with the highest gain \footnote{Note that choosing the empty set and not participating is a valid strategy. Buyer $i$ suffers externality regardless of what she chooses}. It is intuitive that everyone adopting such a strategy will not always lead to good welfare. To see that there is an instance that achieves maximal WRaE, suppose the number of sellers and buyers are equal ($n = k$). For $k \in [1,\dots,n]$, define $\gamma^k$ as the seller set containing only seller $\ell_k$. For a buyer $i$, let $g_i(\gamma^i) = 1$ and $g_i(\gamma) = 1 - \epsilon$ for any $\gamma \ne \gamma^i$. Further, for any buyer pair $i, j$, let $e_{ij}(\gamma_j = \gamma^j) = \tfrac{1}{n-1}$ and $e_{ij}(\gamma_j=\gamma) = 0 \, \forall \gamma \ne \gamma^j$. In this instance, the unique dominant strategy/equilibrium is each buyer $i$ selecting $\gamma^i$, which results in 0 utility for all buyers. However, if each buyer $i$ selects any other seller set aside from $\gamma^i$, they achieve utility $1 - \epsilon$ each. Since this is the only equilibrium in this instance, the WRaE is $n - n\epsilon \rightarrow n$ as $\epsilon \rightarrow 0$. 
\end{proof}


 This result illustrates the very real and unsatisfactory phenomenon that occurs in modern data markets. Buyers, not incentivized to care about their impact on others, make myopic decisions based purely on their own net gain, to the detriment of both the individual and the collective. This impact is exacerbated by the easy replicability of data, allowing multiple buyers to purchase the same dataset. To address this, we consider market intervention in the form of a transaction cost $\mathcal{T}_i(S)$ levied by the platform, a standard approach in online marketplaces. We posit that platforms can use learning models to estimate externalities(Definition \ref{definition:predicted_externality}) and propose they charge each buyer a cost proportional to the net difference in estimated externality they are contributing to and suffering from (Definition \ref{definition:net_transaction_fee}).

\begin{definition}[Predicted Externality]\label{definition:predicted_externality}
     For any $i,j$, let $\Ehat_{ij}(\cdot)$ denote the predicted version of the externality $E_{ij}(\cdot)$, with $\E[\Ehat_{ij}(\cdot)] = \ehat_{ij}(\gamma)$ and $b_{ij}(\cdot) = e_{ij}(\cdot) - \ehat_{ij}(\cdot)$ representing the bias/error of the predictor. We assume predictions satisfy the same scale -- i.e. $\sum_{i}{\Ehat_{ji}(\cdot)} \in [0,1]$. As shorthand, we define $\bmax = \sum_{i \ne j}{\max_{\gamma^1, \gamma^2}|b_{ij}(\gamma^1) - b_{ij}(\gamma^2)|}$
\end{definition}

\begin{definition}[Transaction Cost]\label{definition:net_transaction_fee}
    For a platform-chosen parameter $\alpha \in [0,1]$, and a complete set of purchase orders $S$, we propose the platform charge each buyer $i$ a transaction cost $\mathcal{T}_i(S) = \alpha \left(\sum_{j \ne i}{ \Ehat_{ji}(\gammai)} - \sum_{j \ne i} \Ehat_{ij}(\gammaj)\right)$, with $\alpha = 0$ corresponding to no intervention.
\end{definition} 

Observe that we make no assumptions about the predictions $\hat{E}_{ij}$ and characterize it generally through their itemized bias or expected inaccuracy $b_{ij}$. Secondly, while the proposed cost $\T_i(S)$ can be negative, from the platform's perspective this is just a redistribution and not an actual payment, since the sum of the transaction cost is always 0 - i.e. revenue-neutral. This may be desirable since the addition of this cost does not diminish the cumulative social welfare of the buyers.  However, if platforms do wish to extract revenue they can charge an additional constant cost on top of this without changing any of our results. It is also advantageous that our proposed cost only depends on the externalities and not the net gain. We denote an \emph{instance} of our game with this intervention by the tuple $\I = (g_i, e_{ij}, b_{ij}, \alpha)$. Lastly, the proposed transaction cost renders each buyer $i$'s \emph{effective utility}, the component of their utility that depends on their action $\gammai$, to have the following form: $G_i(\gammai) - \ab \sum_{j}{\Ehat_{ji}(\gammai)} \triangleq \Udi(\gammai)$, with $\E[\Udi(\gamma)] \triangleq u_i^e(\gamma)$.

With this new cost in place, Theorem \ref{theorem:wrae_offline} gives a significantly improved result --- a dominant strategy still exists, with each buyer maximizing their expected effective utility, but WRaE is tightly bounded by $\Theta(n(1-\alpha))$, as the difference in bias parameters, $\bmax$ tend toward 0. This is a strong result since the upper bound implies that on \emph{all} instances, equilibrium has welfare regret less than $n(1-\alpha) + \bmax$. Crucially, this decreases linearly with $\alpha$, with $\alpha \rightarrow 1$ leading to the equilibrium solution attaining the optimal social welfare, up to prediction inaccuracies. The corresponding lower bound, which is independent of the bias terms, shows that this is essentially tight.

\begin{theorem}\label{theorem:wrae_offline}
    Under the proposed transaction cost, it is a dominant strategy for any buyer $i$ to select $\gammai^d = \argmax_{\gamma}{g_i(\gamma)- \alpha \sum_{j \ne i}{\ehat_{ji}(\gamma)}} = \argmax_{\gamma}{u_i^e(\gamma)}$. Further, WRaE is upper bounded by $n(1-\alpha) + \bmax$ and lower bounded by $n(1-\alpha)$.
\end{theorem}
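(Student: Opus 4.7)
The plan is to decompose the theorem into three claims proved in order: the dominant-strategy characterization, the $O(n(1-\alpha))$ upper bound on WRaE, and a matching $\Omega(n(1-\alpha))$ lower bound via an explicit instance.

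For the dominant-strategy claim, I will expand buyer $i$'s expected utility under the proposed transaction cost as
$$\bar{u}_i(S) \;=\; \gbar_i(\gammai) \;-\; \sum_{j \ne i}\ebar_{ij}(\gammaj) \;-\; \alpha\Big(\sum_{j \ne i}\ebar_{ji}(\gammai) \;-\; \sum_{j \ne i}\ebar_{ij}(\gammaj)\Big),$$
and observe that every term not involving $\gammai$ (both the raw suffered-externality sum and its $\alpha$-rebate from the transaction cost) is independent of buyer $i$'s own choice. What remains is exactly $\gbar_i(\gammai) - \alpha\sum_{j\ne i}\ebar_{ji}(\gammai) = \bar{u}^e_i(\gammai)$, so $\gammai^d = \argmax_\gamma \bar{u}^e_i(\gamma)$ maximizes $\bar{u}_i(S)$ regardless of the others' orders, which is a dominant strategy.

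For the upper bound on WRaE, I will first observe that $\sum_i \mathcal{T}_i(S) = 0$ by the antisymmetric construction of the transaction costs, so social welfare is invariant under the intervention. A standard index swap gives $\sum_i\sum_{j\ne i}\ebar_{ij}(\gammaj) = \sum_i\sum_{j\ne i}\ebar_{ji}(\gammai)$, and substituting $\gbar_i(\gammai) = \bar{u}^e_i(\gammai) + \alpha\sum_{j\ne i}\ebar_{ji}(\gammai)$ yields the clean decomposition
$$sw(S) \;=\; \sum_i \bar{u}^e_i(\gammai) \;-\; (1-\alpha)\sum_i \sum_{j\ne i} \ebar_{ji}(\gammai).$$
Comparing $S^d$ to the welfare-optimal $S^*$: the first sum favors $S^d$ because each $\gammai^d$ maximizes $\bar{u}^e_i$, while the second sum differs by at most $(1-\alpha)$ per buyer since the paper's normalization guarantees $\sum_{j\ne i}\ebar_{ji}(\gamma)\le 1$. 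Combining, $sw(S^*) - sw(S^d) \le n(1-\alpha)$.

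For the matching lower bound, I will adapt the instance from Theorem \ref{thrm:independent_externality_eq} by shaving the alternative gains to lie just below the effective-utility threshold induced by the transaction cost. Take $n = k$ with $\gamma^m$ denoting the singleton $\{\ell_m\}$, set $\gbar_i(\gamma^i) = 1$ and $\gbar_i(\gamma) = 1 - \alpha - \epsilon$ for every $\gamma \ne \gamma^i$, and keep $\ebar_{ji}(\gamma^i) = \tfrac{1}{n-1}$ with zero externality elsewhere. Then $\bar{u}^e_i(\gamma^i) = 1 - \alpha$ while $\bar{u}^e_i(\gamma) = 1 - \alpha - \epsilon$ for every other $\gamma$, so $\gamma^i$ remains the unique dominant action. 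Equilibrium welfare is zero (each unit of gain is exactly offset by the unit of total suffered externality), whereas the cyclically-shifted profile $\gammai = \gamma^{(i \bmod n) + 1}$ attains welfare $n(1 - \alpha - \epsilon)$. Sending $\epsilon \to 0$ gives WRaE $\ge n(1-\alpha)$, matching the upper bound.

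The main obstacle is calibrating the lower-bound instance: the alternative-gain gap must be set to precisely $\alpha + \epsilon$ so that the transaction cost's effective penalty is just insufficient to deter the externality-inducing choice; this delicate tuning is what forces the welfare loss to scale as $n(1-\alpha)$ rather than the trivial $n$ (as in Theorem \ref{thrm:independent_externality_eq}) or $0$.
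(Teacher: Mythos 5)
Your proposal is correct, and while the dominant-strategy step and the overall structure (upper bound plus explicit tight instance) mirror the paper, both bounds are obtained by genuinely different routes. For the upper bound, the paper works per buyer: it derives from dominant-strategy optimality the inequality $\sum_{j \ne i}(\ebar_{ji}(\hat{\gamma}_i) - \ebar_{ji}(\gamma_i^*)) \le \min\bigl(1, (\gbar_i(\hat{\gamma}_i)-\gbar_i(\gamma_i^*))/\alpha\bigr)$, substitutes it into the regret expression, and maximizes a minimum of two terms to find that each buyer contributes at most $1-\alpha$. Your argument instead rewrites social welfare globally as $sw(S) = \sum_i \bar{u}^e_i(\gamma_i) - (1-\alpha)\sum_i\sum_{j\ne i}\ebar_{ji}(\gamma_i)$ and bounds the two pieces separately (the first favors the dominant profile, the second is bounded by the normalization $\sum_{j\ne i}\ebar_{ji}(\gamma)\le 1$); this is cleaner, avoids the division by $\alpha$ (so the $\alpha=0$ case needs no special handling), and makes transparent exactly where the $(1-\alpha)$ factor comes from, at the cost of not exhibiting the per-buyer trade-off ($v_i = \alpha$) that explains why the bound is tight. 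For the lower bound, the paper uses a two-seller instance ($\gbar_i(\ell_1\cap\ell_2)=1$ vs.\ $\gbar_i(\ell_2)=1-\alpha-\epsilon$), whereas you recalibrate the $n=k$ singleton instance of Theorem \ref{thrm:independent_externality_eq} by widening the gain gap to $\alpha+\epsilon$; both yield equilibrium welfare $0$ against an alternative profile of welfare $n(1-\alpha-\epsilon)$, so either works. (Your instance shares with the paper's Theorem \ref{thrm:independent_externality_eq} construction the minor imprecision of assigning nonzero gain to $\gamma=\emptyset$, which the model fixes at $0$; excluding $\emptyset$ changes nothing in either argument.)
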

\begin{proof}
    The expected utility under the given transaction cost is given by: $g_i(\gammai) - \sum_{j \ne i}e_{ij}(\gammaj) - \ab\left(\sum_{j \ne i}\ehat_{ji}(\gammai) - \sum_{j \ne i}{\ehat_{ij}(\gammaj)}\right)$. Since a buyer can only influence $\gammai$, the second and last terms can essentially be ignored, and the buyer has a dominant strategy, which is to maximize the expected effective utility, regardless of how others act. Regarding WRaE, we first prove the upper bound, before showing a specific instance achieves this upto bias terms.

    \emph{Upper bound:} For an instance $\I$, let $S^* = (\gamma_{1}^*, \gamma_{2}^*, \dots, \gamma_{n}^*)$ denote the social welfare optimal, and let $S^d = (\gamma^d_{1}, \gamma^d_{2}, \dots, \gamma^d_{n})$ denote the dominant strategy taken by the buyers. 
    The following is a direct implication of $S^d$ being the dominant strategy for all buyers: $ g_i(\gammai^d) - \alpha \sum_{j \ne i}{\ehat_{ji}(\gammai^d)} \geq  g_i(\gammai^*) - \alpha \sum_{j \ne i}{\ehat_{ji}(\gammai^*)}$, which implies:
    \begin{equation}\label{eq:dom_inequality}
         \sum_{j \ne i}{(\ehat_{ji}(\gammai^d) - \ehat_{ji}(\gammai^*))} \le \min\left(1, \tfrac{g_i(\gammai^d) - g_i(\gammai^*)}{\alpha}\right)
    \end{equation}
    
    Next, the social welfare expression (sum of all utilities) is unchanged due to transaction cost being revenue neutral; thus the social welfare difference between $S^*$ and $S^d$ is given by: $\sum_{i=1}^{n}g_i(\gammai^*) - \sum_{j \ne i} e_{ij}(\gammaj^*) - g_i(\gammai^d) + \sum_{j \ne i} e_{ij}(\gammaj^d)$ which is equal to:
    \begin{equation}\label{eq:offline_regret_decomp}
        \sum_{i=1}^{n}{\sum_{j \ne i}\left(\ehat_{ji}(\gammai^d) - \ehat_{ji}(\gammai^*)\right)} - \left(g_i(\gammai^d) - g_i(\gammai^*)\right) + \sum_{j \ne i}{b_{ji}(\gammai^d) - b_{ji}(\gammai^*)}
    \end{equation}
    which holds since we are summing all possible externality pairs for seller sets $\gamma^d$ and $\gamma^*$. Next, we apply inequality \ref{eq:dom_inequality} to upper bound the expression above:
    \begin{gather*}
        \text{(\ref{eq:offline_regret_decomp})} \leq \bmax + \sum_{i=1}^{n}{\min\left(1, \tfrac{g_i(\gammai^d) - g_i(\gammai^*)}{\alpha}\right) - (g_i(\gammai^d) - g_i(\gammai^*))}\\
        = \bmax + \sum_{i=1}^{n}{\min\left(1 - (g_i(\gammai^d) - g_i(\gammai^*)), 
        (g_i(\gammai^d) - g_i(\gammai^*))\left(\tfrac{1}{\alpha} - 1\right)\right)}
    \end{gather*}
    Putting aside the $\bmax$ term, each summand above is a min of two values, which is maximum when the two values are equal. Let $v_i = (g_i(\gammai^d) - g_i(\gammai^*))$. Thus we have $\forall i, 1 - v_i = \tfrac{v_i}{\alpha} - v_i \implies v_i = \alpha$, and thus we can upper bound this with: $\sum_{i=1}^{n}{1 - \alpha} = n(1 - \alpha)$. 
    
    \emph{Lower bound:} Consider a setting with only two sellers $\ell_1$ and $\ell_2$. For each buyer $i$, we have the following: $\forall i, \, g_i(\emptyset) = 0; g_i(\ell_1) = 0; g_i(\ell_2) = 1 - \alpha - \epsilon; g_i(\ell_1 \cap \ell_2) = 1$ and $\forall i,j, \, i \ne j, \, e_{ij}(\emptyset) = 0; e_{ij}(\ell_1) = 0; e_{ij}(\ell_2) = 0; e_{ij}(\ell_1 \cap \ell_2) = \tfrac{1}{n-1}$. Regarding the social welfare optimal solution, note that there is no reason to choose $\emptyset$ or $\ell_1$; suppose $S$ is such that $k$ buyers choose $\ell_1 \cap \ell_2$ and $n-k$ choose $\ell_2$. We thus have: $sw(S) = (n-k)\left[1 - \alpha - \epsilon - \tfrac{k}{n-1}\right] + k\left[1 - \tfrac{k}{n-1}\right]$ which is equal to $n - n\alpha - n\epsilon - k\left[\tfrac{n}{n-1} - \alpha - \epsilon \right]$. As $\epsilon \rightarrow 0$, social welfare is maximized when $k = 0$, since $\alpha \in [0,1]$ and $\tfrac{n}{n-1} > 1$. Thus the optimal is all agents choosing $\ell_2$, for a social welfare of $n(1-\alpha)$ as $\varepsilon \rightarrow 0$. As for equilibrium/dominant strategy, observe that each buyer's dominant strategy is choosing $\ell_1 \cap \ell_2$, \emph{regardless} of platform's bias. Choosing this option guarantees them at least $1-\alpha$ utility whereas choosing $\gamma = \ell_2$ gives them at most $1- \alpha - \varepsilon$ utility. However, this dominant strategy, however, has social welfare 0, leading to worst-case WRaE $\rightarrow n(1-\alpha)$ on this instance. 
\end{proof}

\section{Online setting with learned valuation}\label{section:learned_valuations}
We now relax the assumption that buyers know their expected valuations a priori; instead, we consider them repeatedly interacting with the market and acting upon values learned during this process. This is consistent with real-world behavior since (1) sellers usually offer their data products for a prescribed time-period with buyers needing to pay anew for continued access (typical in AWS Data Marketplace) and (2) data products often capture real-time phenomenon (weather or foot traffic data) leading consumers to routinely purchase fresh data from the same source (see Appendix \ref{app:online_justification} for a detailed discussion). Buyers in this online setting face an exploration vs. exploitation problem, a dichotomy well modeled by the multi-armed bandit (MAB) framework. Each time a buyer interacts with the market, they can choose between $2^k$ ``arms'', representing the different seller combinations at their disposal. By ``pulling an arm'' they choose one of these options and observe the stochastic gain, externality, and transaction cost associated with that choice. At each round $t$, all buyers make such a decision and note that buying nothing, $\gamma = \emptyset$, is a valid strategy for any buyer at any time. The additional challenges here are twofold: proposing a buyer learning algorithm that works well over time given the exponential number of arms, and showing this buyer-optimal algorithm does not degrade the social welfare regret. Our goal in this section is to explore the possible learning algorithms available to buyers and the scenarios wherein they are performant.

We consider the platform charging the proposed transaction cost $\mathcal{T}_i(S)$ at each round as described in definition \ref{definition:net_transaction_fee}. Under the proposed transaction cost, it is the dominant strategy of each buyer to maximize the expected effective utility, $\udi(\gamma) = \E[\Udi(\gamma)]$, with $\gammai^d$ being the maximizer, and $\Udi(\gamma)$ referring to its stochastic analogue. Together, $\Udi(\gamma)$ and $\udi(\gamma)$ are analogous to the stochastic and expected ``reward'' for an arm in the standard MAB setting. Using this, we now define appropriate notions of individual and collective regret for the online setting.



\begin{definition}\label{definition:dom_strat_regret}
    For instance $\I$, the expected \textbf{online effective regret} for buyer $i$ is the difference in the effective utility between their full information dominant strategy, $\gammai^d$, and their strategy at time $t$, $\gammai^t$:
    $R_d^i(T; \I) = \E\left[\sum_{t=1}^{T}\Udi(\gammai^d) - \Udi(\gammai^t)\right]$, which is expanded as: $R_d^i(T; \I) =$
    \begin{equation*}
        \sum_{t=1}^{T}{\left[g_i(\gammai^d) - g_i(\gammai^t) - \ab \sum_{j \ne i}{\ehat_{ji}(\gammai^d) - \ehat_{ji}(\gammai^t)}\right]}
    \end{equation*}
    \vspace{-1em}
\end{definition}

\begin{definition}\label{definition:social_welfare_regret}
    For an instance $\I$, the expected \textbf{online cumulative welfare regret} is the difference in total expected utility (across all agents) between the social welfare maximal strategy, $S^* = (\gamma_1^*, \dots, \gamma_n^*)$, and the strategy taken at time $t$, $S^t = (\gamma_1^t, \dots, \gamma_n^t)$. Since the proposed transaction cost is revenue-neutral, this regret be expressed as: $R_w(T; \I) =$
    \begin{equation*}
        \sum_{i=1}^{N}{
            \sum_{t=1}^{T} {
                \left[
                    g_i({\gammai^*}) - g_i(\gammai^t) - \sum_{j \ne i}{e_{ij}(\gammaj^*) - e_{ij}(\gammaj^t)}
                \right]
            }
        }
    \vspace{-1em}
    \end{equation*}
    
\end{definition}

\subsection{Algorithms}
The exponential number of seller sets available to each buyer make learning a challenging problem. Using the forth-coming lemma \ref{lemma:concentration}, one can invoke the Upper Confidence Bound (UCB) algorithm to obtain $\Tilde{\Oh}(\sqrt{2^k T})$ effective regret for each buyer. To improve upon this, additional structure is needed. At first glance, the linear/combinatorial bandit framework, wherein an agent can simultaneously pull a subset $k$ arms at each round, may seem attractive \citep{combes2015combinatorial, cesa2012combinatorial}. While this captures buyer interactions in our setting and reduces regret dependency from $2^k$ to $k$, it assumes a linear reward model. That is, an unknown vector $\bm{x} \in \reals^k$ specifies the individual reward for each of the $k$ options, and the reward for choosing a subset $\gamma$ is: $\gamma^T\bm{x}$. In our setting, each of the $k$ coordinates represents a different data seller, this becomes a strong assumption as it implies the net gain and externality experienced by buyers when purchasing from a bundle of sellers is uncorrelated and can be linearly decomposed. The utility of data products is argued to be richly correlated and this assumption may not hold \citep{agarwal2019marketplace}.

With the linear utility assumption being too strong, we consider a weaker structure. $\gamma$ can be represented as a $k$ bit string, with bit $i$ denoting inclusion of seller $\ell_i$. Then for any $\gamma^1$ and $\gamma^2$, define $D_h(\gamma^1, \gamma^2)$ as the normalized Hamming distance (range is $[0, \dots, 1]$, in increments of $1/k$) between the two, counting the number of sellers in which these sets differ. If $D_h(\gamma^1, \gamma^2)$ is close in Hamming distance, then the two seller sets consist of roughly the same sellers, and it is reasonable that the gain and externalities induced by these two will not be drastically different. We formalize this with the following \emph{metric} property: for each buyer $i$ and any pair $(\gamma^1, \gamma^2)$: $|g_i(\gamma^1) - g_i(\gamma^2)| \leq \lambda_g D_h(\gamma^1, \gamma^2)$ and $|\sum_{j}e_{ji}(\gamma^1) - \sum_{i}e_{ji}(\gamma^2)| \leq \lambda_e D_h(\gamma^1, \gamma^2)$\footnote{WLOG, we fix $\lambda_g = \lambda_e = 1$ as it only changes bounds in a multiplicative manner}. We denote $\Delta_i(\gammai) = \udi(\gammai^d) - \udi(\gammai)$ as the buyer $i$'s expected effective utility gap for $\gammai$, with $\Delta_i(\gammai) \in [0,2]$. $n_t(\gammai)$ denotes how often $\gammai$ has been selected by buyer $i$. The metric property implies that $\Delta_i(\gammai^t) = \udi(\gammai^*) - \udi(\gamma^1) \leq 2D_h(\gamma^1, \gamma^2)$. Lastly, no assumptions are made \emph{across} different buyers - i.e. no relation assumed between $g_i(\gamma^1)$ and $g_j(\gamma^2)$. 

The metric assumption is far weaker than linearity as it still allows rich levels of correlation between options. A common technique in metric bandits is to uniformly cover the space of arms and UCB over the arms in the cover \citep{slivkins2019introduction}. This is practical when arms are in a continuous metric space since covering can be arbitrarily dense; in Hamming space however, any center in the covering is at least $\tfrac{1}{k}$ away from an arm it covers, leading to degenerate bounds. \citet{kleinberg2019bandits} propose a general purpose \emph{zooming} algorithm, which adaptively discretizes a region proportional to its reward. It maintains a set of \emph{active} arms, each of which \emph{covers} any arm falling within its confidence radius. It ensures all arms are covered at every round, and selects from active arms using the UCB rule. Given the near-optimality of this algorithm in metric settings, we adapt this for our purpose \citep{kleinberg2019bandits}. While our proposed algorithm (given below) is similar to the original, obtaining good regret bounds requires more careful analysis. The value of $\gamma$ depends on both the gain and externality which requires a stronger concentration result. More importantly, distance in Hamming space is quantized. This means that any two distinct elements are at least $\frac{1}{k}$ apart and there are a large number of elements that cannot be strictly compared. While the zooming algorithm in pathologically worst instances cannot improve upon standard algorithms, we show that in many natural instances, it performs significantly better. Our insights may be of independent interest in bandit settings with exponential arms. 
We present the algorithm below and continue the analysis thereafter.

\begin{algorithm}
\caption{Zooming algorithm for buyer $i$}\label{alg:cap}
\begin{algorithmic}
\State Active set $\mathcal{A}_i$ randomly initialized with a single seller set
\State Confidence radius for each $\gamma \in \mathcal{A}_i$ is $c_t^i(\gamma) = \sqrt{\tfrac{12\log T}{n_t^i(\gamma) + 1}}$
\State{\textbf{for} $t=1,\dots,T$ \textbf{do}}
    \Indent 
        \State \texttt{// An active seller-set $\gamma' \in \mathcal{A}_i$ covers a set $\gamma$ if $\lambda D_h(\gamma, \gamma') \leq c_t^i(\gamma')$}
        \State{\textbf{if} there are $\gamma$ not covered (under Hamming distance) by seller-sets in $\mathcal{A}_i$}
        \Indent
            \State{Pick any uncovered choice and append it to $\mathcal{A}_i$}
        \EndIndent
        \State \textbf{End}
        \State{Define $\text{UCB}_i(\gamma) = \udihat(\gamma) + 2c^i_t(\gamma)$ for each active choice $\gamma$} 
        \State{Select the arm with the highest $\text{UCB}_i$ value.}
    \EndIndent
\State \textbf{End}
\end{algorithmic}
\end{algorithm}

\subsection{Online Effective Regret}
We bound the online effective regret of a buyer $i$ under the zooming algorithm. First, using McDiarmid's inequality, Lemma \ref{lemma:concentration} shows sampled effective utility $\udihat$ is concentrated around its expected quantity $\udi$ with high probability. Conditioned upon this \emph{clean} event, Lemma \ref{lemma:slivkins_lemma} bounds the number of times an active choice is selected and the distance between any two active choices, mirroring a result in \citet{kleinberg2019bandits}. Proofs for these two lemmas are in the Appendix \ref{app:learn}.

\begin{lemma}\label{lemma:concentration}
    Let $\udihat(\gamma, h)$ denote the sample mean of $\Udi(\gamma)$ with $h$ samples, and define event $\mathcal{E} = \{\forall i, \forall \gamma, \forall t,$ $ |\udihat(\gamma, n_i^t(\gamma)) - \udi(\gamma)| \leq c_t^i(\gamma) \}$. Then $Pr[\mathcal{E}] \geq 1 - 1/T^2$.
\end{lemma}

\begin{lemma}\label{lemma:slivkins_lemma}
    If $\mathcal{E}$ holds, then $n^i_t(\gamma) \leq \tfrac{108\log T}{\Delta_i^2(\gamma)}$, $\forall \gamma, i, t$. For two active sets $\gamma^1, \gamma^2 \in \mathcal{A}_i$, $D_h(\gamma^1, \gamma^2) \geq \max(1/k, 1/3 \min(\Delta_i(\gamma^1), \Delta_i(\gamma^2)))$.
\end{lemma}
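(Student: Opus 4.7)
The plan is to handle the two parts of the lemma separately, both leveraging the concentration guarantee of lemma \ref{lemma:concentration}.

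For part 1, fix buyer $i$ and seller set $\gamma$. If $\gamma$ is never pulled, the bound holds trivially, so assume $\gamma$ has been pulled and let $s \leq t$ be the last round on which buyer $i$ selected $\gamma$. By the invariant maintained in algorithm \ref{alg:cap} (the algorithm activates any uncovered arm before selection), at round $s$ there exists an active seller set $\gamma' \in \mathcal{A}_i$ covering $\gammai^d$, i.e.\ $D_h(\gamma', \gammai^d) \leq c_s^i(\gamma')$. Since $\gamma$ was selected over $\gamma'$ under the UCB rule, I have $\udihatt(\gamma) + 2 c_s^i(\gamma) \geq \udihatt(\gamma') + 2 c_s^i(\gamma')$ at time $s$. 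I would then apply the clean event $\mathcal{E}$ to replace $\udihatt(\gamma) \leq \udi(\gamma) + c_s^i(\gamma)$ and $\udihatt(\gamma') \geq \udi(\gamma') - c_s^i(\gamma')$, and invoke the closeness assumption with $\lambda=1$ together with the covering inequality to obtain $\udi(\gamma') \geq \udi(\gammai^d) - c_s^i(\gamma')$. Chaining these three facts yields $\udi(\gamma) + 3 c_s^i(\gamma) \geq \udi(\gammai^d)$, i.e.\ $\Delta_i(\gamma) \leq 3 c_s^i(\gamma) = 3\sqrt{12\log T / (n_s^i(\gamma)+1)}$. Squaring and rearranging gives $n_s^i(\gamma) + 1 \leq 108\log T / \Delta_i^2(\gamma)$, and since $s$ is the last pull, $n_t^i(\gamma) = n_s^i(\gamma)+1$, closing part 1.

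For part 2, consider two distinct active arms $\gamma^1, \gamma^2 \in \mathcal{A}_i$. The $1/k$ lower bound is immediate: the normalized Hamming distance lives on the grid $\{0, 1/k, 2/k, \dots, 1\}$, so any two distinct seller sets are at distance at least $1/k$. For the second component, without loss of generality let $\gamma^2$ be the one activated later, at round $s$. By the algorithm's activation rule, $\gamma^2$ was not covered at round $s$ by any already-active arm, in particular not by $\gamma^1$, so $D_h(\gamma^1, \gamma^2) > c_s^i(\gamma^1)$. Applying part 1 backwards, $c_s^i(\gamma^1) = \sqrt{12\log T / (n_s^i(\gamma^1)+1)} \geq \Delta_i(\gamma^1)/3$, hence $D_h(\gamma^1, \gamma^2) \geq \Delta_i(\gamma^1)/3 \geq \tfrac{1}{3}\min(\Delta_i(\gamma^1), \Delta_i(\gamma^2))$. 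Combining the two lower bounds via a max yields the claim.

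The main obstacle is part 1, specifically invoking the correct covering active arm. One must be careful that such a $\gamma'$ exists at the moment of comparison (it does, by the explicit covering invariant in the algorithm, which checks for uncovered arms before each selection), and that the covering inequality is used in conjunction with the closeness assumption in the right direction so that the $c_s^i(\gamma')$ terms cancel rather than compound. A secondary subtlety is keeping the indexing of $n_t^i(\gamma)$ consistent with the convention used in the definition of $c_t^i$; the shift by $+1$ in the denominator of $c_t^i$ is what makes $n_s^i(\gamma)+1$ (after the last pull) appear cleanly in the final bound. The discreteness of Hamming distance poses no difficulty here since part 1 only uses the covering inequality in one direction; the quantization is exploited only to supply the $1/k$ branch in part 2.
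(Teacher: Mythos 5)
Your proposal is correct and follows essentially the same argument as the paper: compare $\text{UCB}_i(\gamma)$ at the last selection time $s$ against the active arm covering $\gammai^d$, use the clean event plus the closeness/covering inequality to cancel the $c_s^i(\gamma')$ term and conclude $\Delta_i(\gamma) \leq 3c_s^i(\gamma)$, then derive the distance bound from the activation rule and the $\tfrac{1}{k}$ granularity of Hamming distance. The only cosmetic difference is that you merge the paper's two cases (whether $\gammai^d$ is itself active or merely covered) into one, which is fine since an active arm covers itself.
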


With the following results, theorem \ref{theorem:main_regret} states a precise instance dependent regret bound for each buyer. To sketch this, all seller sets are partitioned based on how close their effective utility is to the optimal $\gammai^d$, and add up their contribution to the regret. For seller sets where the difference is large, we appeal to lemma \ref{lemma:slivkins_lemma} and use a packing argument to bound the number of active elements here. For any active $\gamma$, their contribution to the total regret for buyer $i$ can be expressed using their gap $\Delta_i(\cdot)$ and the number of times they are chosen, $n_t^i(\cdot)$. For seller sets with utility close to the optimal, the discrete property of the Hamming metric makes lemma \ref{lemma:slivkins_lemma} too coarse, and the exact bound is instance-dependent. We address this in subsequent results.

\begin{theorem}\label{theorem:main_regret}
     For instance $\I$ and buyer $i$, let $S^{\I}_i\left(r/k\right)$ be the set of active sellers sets where each element $\gamma \in S^{\I}_i\left(r/k\right)$ satisfies $r/k \leq \Delta_i(\gamma) \leq 2r/k$. Let $\mathcal{C}(\cdot, d)$ denote the maximum packing of a region with balls of Hamming diameter $d$. Then for some $\delta > 0$, expected dominant strategy regret $R^i_d(T, \I)$ is upper bounded by:
    \begin{equation*}
         \frac{\delta}{k}2T + 108k\log T \left(\sum_{j=-1}^{\log 1/\delta}{2^j \left|S^{\I}_{i}\left(\tfrac{1}{2^j k}\right)\right|} + \sum_{j=2}^{\log k}{2^{-j} |\mathcal{C}(S_i^{\I}(\tfrac{2^j}{k}), \tfrac{2^j}{3k})|} \right)
    \end{equation*}
\end{theorem}
\begin{proof}
     We consider the contribution of seller sets in $S^{\I}_i(\tfrac{r}{k})$ toward the total effective regret suffered by buyer $i$. By adding this contribution over all values of $\tfrac{r}{k}$ (with $r < k$), we can express the cumulative regret. For each active seller set $\gamma \in S_i^{\I}(\tfrac{r}{k})$, their contribution toward DS regret is $\Delta_i(\gamma) \cdot n^i_t(\gamma)$. Next, assume the clean event $\mathcal{E}$ holds. By lemma \ref{lemma:slivkins_lemma}, we have that $\Delta_i(\gamma) \leq 3 c_t(\gamma) = 3 \sqrt{\tfrac{12\log T}{n^i_t(\gamma) + 1}}$ which implies $n_t(\gamma) \leq \tfrac{108\log T}{\Delta_i^2(\gamma)}$. Thus, $\Delta_i(\gamma) \cdot n_t^i(\gamma) \leq \tfrac{108\log T}{\Delta_i(\gamma)} \leq \tfrac{108 k \log T}{r}$. Now consider three ranges of $\tfrac{r}{k}$: (1) $\tfrac{r}{k} < \tfrac{\delta}{k}$, for some $\delta \in (0,1)$, (2) $\tfrac{r}{k} \in \left[\tfrac{\delta}{k}, \tfrac{2}{k}\right]$, and (3) $\tfrac{r}{k} \in \left[\tfrac{4}{k}, 1\right]$. For (1), we will use a trivial upper bound and will leave $\delta$ to be appropriately selected later. For (2), observe that by lemma \ref{lemma:slivkins_lemma}, any two active seller sets satisfy, $D_h(\gamma^1, \gamma^2) \geq \tfrac{1}{3}\min(\Delta(\gamma^1), \Delta(\gamma^2)) = \tfrac{c}{3k}$, where $c \leq 2$. In other words, the lower bound for the distance between any active seller is $< \tfrac{1}{k}$ and thus in the worst case, each seller sets only covers itself. For the elements in (3) however, applying lemma \ref{lemma:slivkins_lemma} implies that any two active sets are at least $\tfrac{r}{3k}$ apart. The maximum number of active seller sets here is thus upper bounded by the maximal packing of the space $S_i^{\I}(\tfrac{r}{k})$ with balls of diameter $\tfrac{r}{3k}$. Thus, the cumulative regret can be expressed as follows ($r$ is expressed as powers of two: $r = 2^j$).
     \vspace{-1.2ex}
     \begin{equation}
        R^i_d(T, \I) \leq \frac{\delta}{k}2T + \sum_{j=-1}^{\log \tfrac{1}{\delta}}{108 k \log T k 2^j \left|S^{\I}_{i}\left(\tfrac{1}{2^j k}\right)\right|} + \sum_{j=2}^{\log k}{108 k2^{-j} \log T  |\mathcal{C}(S_i^{\I}(\tfrac{2^{j}}{k}), \tfrac{2^{j}}{3k})|}
    \end{equation}
    With this, we now match the bound in the theorem statement. Denote this by $R(\cdot|\text{clean})$, which captures the conditioning event. The bad event is lemma \ref{lemma:concentration} not holding (probability $\tfrac{2}{T^2}$). Thus, the expected regret is dominated by $R(\cdot|\text{clean})$ since: $R_d^i(T, \I) \leq R(\cdot|\text{clean})(1 - \tfrac{2}{T^2}) + 2T \tfrac{2}{T^2}$.
\end{proof}

Despite being stated from an instance-dependent perspective, this result is quite general and insightful. Observe that the worst case occurs when all seller sets have gap $\Delta_i(\gamma) < 1/k$; the Hamming metric is not useful here since its smallest non-zero value is $1/k$. In this pathological case, the regret and the algorithm devolve into UCB (Corollary \ref{cor:worst_case_ucb}). The algorithm shines, however, when the gap to the optimal is spread out evenly as a function of the Hamming distance (Theorem \ref{theorem:good_metric_bound}).

\begin{cor}\label{cor:worst_case_ucb}
    For any metric instance $\I$ where $\Delta_i(\gamma) \leq 1/k$ for all $\gamma$, $R^i_d(T; \I) \leq  \Oh(\sqrt{2^k T\log T})$.
\end{cor}
\begin{proof}
    Since the gap $\Delta_i(\gamma) \leq \tfrac{1}{k}$, the third term in the expression given in theorem \ref{theorem:main_regret} will be 0, and we can focus on the first two terms. Next, we note that the second term, $\sum_{j=-1}^{\log \tfrac{1}{\delta}}{k 2^j108\log T\left|S^{\I}_{i}\left(\tfrac{1}{2^j k}\right)\right|}$, is increasing in $j$ and maximized when all elements belong to $S_i^I(\delta/k)$. This can be expressed as: $S_i^I(\delta/k) = \sum_{j=-1}^{\log \tfrac{1}{\delta}}{k 2^j108\log T\left|S^{\I}_{i}\left(\tfrac{1}{2^j k}\right)\right|} \leq k 2^k \tfrac{1}{\delta}108 \log T$. Going back to cumulative regret, we now have two terms increasing and decreasing in $\delta$ respectively. Their sum is minimized when we set them to be equal:
    \begin{equation*}
        \begin{split}
            \tfrac{\delta}{k}2T = k 2^k \tfrac{1}{\delta}108 \log T \implies 2\delta = \sqrt{\tfrac{k^2 2^k 108\log T}{T}} \implies R^i_d(T) = \Oh\left(\sqrt{2^k T \log T}\right) 
        \end{split}
    \end{equation*}
\end{proof}

\begin{theorem}\label{theorem:good_metric_bound}
    For any metric instance $\I$ where $\Delta_i(\gamma) \in [2D_h(\gammai^d, \gamma) - 2/k, 2D_h(\gammai^d, \gamma)]$ \footnote{2 is just a scale factor since $D_h(\cdot) \leq 1$ but $\Delta_i(\gamma) \in [0,2]$} for all $\gamma$, $R^i_d(T; \I) \leq  \Oh\left(k(\sqrt{kT\log T} + 2^{0.58k}\log T)\right)$
\end{theorem}
\begin{proof}
    Starting with the expression given in theorem \ref{theorem:main_regret}, we initially focus on the first two terms. Note that the second term considers seller sets whose gap is less than $\tfrac{4}{k}$. By given property of $\I$, seller sets who satisfy this are at most $\tfrac{3}{k}$ Hamming distance away. Thus, we can write the second term as: $\sum_{j=-1}^{\log 1/\delta}{k 2^j108\log T\left|S^{\I}_{i}\left(\tfrac{1}{2^j k}\right)\right|}$ which we can upper bound as: $\leq \sum_{j=-1}^{\log 1/\delta}{k 2^j108\log T \sum_{i = 0}^{3}{\binom{k}{i}}} \leq 216\log(T) k^4 \frac{1}{\delta}$. Comparing this with the $\tfrac{\delta}{k}2T$ term, we now have an increasing and decreasing term in $\delta$, which is minimized when we set them to be equal: $\tfrac{\delta}{k}2T = k^4 \tfrac{1}{\delta}216 \log T \implies \delta = \sqrt{\tfrac{k^5 216 \log T}{2T}}$. Note that $\delta$ is not an algorithm or instance parameter but used only for analysis. Putting all this together and using a standard assumption that $T > k$, we can state the first two term to be upper bounded by $\Oh(k^{3/2}\sqrt{T \log T})$. 
    
    We now turn our attention to the third and last term which considers regions $S_i^\I(\tfrac{r}{k})$ with $\tfrac{r}{k} \geq \tfrac{4}{k}$. Our instance condition implies the set of choices whose gap is in region $S_i^\I(\tfrac{r}{k})$ correspond to $\gamma$ satisfying $D_h(\gammai^d, \gamma) \in [\tfrac{r}{2k}, \tfrac{r}{k} + \tfrac{1}{k}]$. We are looking to bound the maximal packing size of this region with balls of Hamming diameter $\tfrac{r}{3k}$. By a simple volume argument, we have that: $|\mathcal{C}(S_i^{\I}(\tfrac{r}{k}), \tfrac{r}{3k})| \leq \frac{\sum_{j=r/2}^{r}{\binom{k}{j}}}{\sum_{j=0}^{r/6}{\binom{k}{j}}}$. Next, note that we wish to upper bound the sum of the packing sizes of for each $S_i^{\I}(\tfrac{r}{k})$ region with $\tfrac{r}{k} \in [\tfrac{4}{k}, \tfrac{8}{k}, \dots, \tfrac{1}{2}, 1]$. These correspond to seller sets being in the following Hamming distance ranges from $\gammai^d$:
    $[[\tfrac{2}{k}, \tfrac{5}{k}], [\tfrac{4}{k}, \tfrac{9}{k}] \dots, [\tfrac{1}{4}, \tfrac{1}{2} + \tfrac{1}{k}], [\tfrac{1}{2}, 1]]$. We observe that the packing size upper bound (given above) is increasing in $r$ up to $r=\tfrac{1}{2}$, which we express as:
    \begin{equation}
        |\mathcal{C}(S_i^{\I}(\tfrac{1}{2}), \tfrac{1}{6})| \leq \frac{\sum_{j=k/4}^{k/2}{\binom{k}{j}}}{\sum_{j=0}^{k/12}{\binom{k}{j}}} \leq \frac{2^{k-1}}{2^{H\left(\tfrac{1}{12}\right)k}} \leq 2^{0.58k}
    \end{equation}
    where $H(\cdot)$ is the binary entropy function. Now we can sum over all the values of $\tfrac{r}{k}$ under consideration and express this as follows: $108 k \log T \sum_{j=2}^{\log k}2^{-j} |\mathcal{C}(S_i^{\I}(\tfrac{2^{j}}{k}), \tfrac{2^{j}}{3k})| \leq 108 k 2^{0.58k} \log T$. Putting this together with the bound for the first two terms gives us the desired result.
\end{proof}

The result above improves upon UCB, disentangling the dominant $\sqrt{T}$, from the exponential $2^{0.58k}$ term. The instance condition that led to this can be naturally interpreted: as seller sets $\gamma$ become more distinct from the buyer's optimal $\gammai^d$, a quantity captured by the Hamming metric, the gain and externality of this diminishes accordingly. In other words, options close to the optimal have similar gain and externality, with those further away getting poorer. Note that $\gammai^d$ need not be known, and this only assumes a relationship with respect to $\gammai^d$ and not between any arbitrary arms. This condition can be viewed as a looser form of the ``target-set'' assumption in bandit literature \citep{slivkins2019introduction, kleinberg2019bandits}. In general, our goal here was to sketch out possible learning algorithms for buyers to learn their strategy, with the exponential set of options available being a challenge. While linear bandits provide the best regret guarantees, they require exact independence and linearity assumptions which may be unreasonable for data products. Correspondingly, we considered a weaker metric assumption and adapted a state-of-the-art algorithm for our unique Hamming metric setting. Our analysis shows that while in the worst case it cannot outperform UCB algorithm due to the discrete property of the Hamming metric, it can in more natural settings. Specifically, as gain and externality gap between an arbitrary seller set and buyer $i$'s optimal becomes more correlated with their Hamming distance, covering becomes more effective, and regret drops. In general, the zooming algorithm allows us to smoothly interpolate the regret as the structural assumptions of the instance become looser or tighter. UCB and Linear bandits can be seen as extreme versions of these, with no assumptions and very strong assumptions respectively. Lastly, from an operational perspective, buyers rarely consider all $k$ sellers in a marketplace; instead, they are usually deciding between a smaller subset of sellers. In such a case, a buyer only needs to learn valuations over this smaller set, and the regret bound becomes proportional to that instead of $k$.

\subsection{Online welfare regret}
We observed in theorem \ref{theorem:wrae_offline} that agents playing their dominant strategy under the proposed transaction cost leads to $\Theta(n(1-\alpha))$ welfare regret (WRaE). Having established that buyers can adopt online learning algorithms to learn their dominant strategy, we consider the analogous question in the online setting: what is the WRaE when agents play their dominant strategy using online algorithms? We prove that this can be decomposed into online effective regret discussed above and the offline WRaE (Theorem \ref{theorem:wrae_offline}): 

\begin{theorem}\label{theorem:online_welfare_regret}
    The online welfare regret $R_w(T;\I)$ is upper-bounded by: $2nT(1-\alpha) + 2T\bmax + n R_d(T;\I)$.
\end{theorem}
\begin{proof}
    Let $S^* = (\gamma_1^*, \dots, \gamma_n^*)$ be the social optimal strategy, $S^d = (\gamma_1^d, \dots, \gamma_n^d)$ the dominant strategy of each buyer, and $S^t = (\gamma_1^t, \dots, \gamma_n^t)$ the strategy taken by buyers at time $t$. We can add and subtract the welfare at the dominant strategy to the online welfare regret, $R_w(T)$, and express it as:
    \begin{equation*}
        \begin{split}
            R_w(T) = \sum_{t=1}^{T}{
            \sum_{i=1}^{N} {
                \left(
                    g_i({\gammai^*}) - \textstyle\sum{e_{ij}(\gammaj^*)} - g_i(\gammai^d) + \textstyle\sum{e_{ij}(\gammaj^d)}
                \right)
                + 
                \left(
                    g_i({\gammai^d}) - \textstyle\sum{e_{ij}(\gammaj^d)} - g_i(\gammai^t) + \textstyle\sum{e_{ij}(\gammaj^t)}
                \right)
            }
        } 
        \end{split}
    \end{equation*}
    The first part of the sum, $\sum_{i=1}^{N}{g_i({\gammai^*}) - \textstyle\sum{e_{ij}(\gammaj^*)} - g_i(\gammai^d) + \textstyle\sum{e_{ij}(\gammaj^d)}}$, is exactly the WRaE quantity that we upper bounded to $n(1-\alpha) + \bmax$ in theorem \ref{theorem:wrae_offline}, which we can then sum over $T$. Next, observe that $e_{ji}(\gamma) = \alpha \ehat_{ji}(\gamma) + (1-\alpha) \ehat_{ji}(\gamma) + b_{ji}(\gamma)$ for any $\gamma$. Thus, we can re-arrange the latter sum as:
    \begin{gather}
        \sum_{i=1}^{N}\left[\sum_{t=1}^{T}g_i(\gammai^d) - \sum_{t=1}^{T}\sum_{j\ne i}{\alpha \ehat_{ji}(\gammai^d)} - \sum_{t=1}^{T}{g_i(\gammai^t) + \sum_{t=1}^{T}\sum_{j \ne i}{\alpha \ehat_{ji}(\gammai^t)}}\right] \\ + T\sum_{j \ne i}{{[b_{ji}(\gammai^t) - b_{ji}(\gammai^d)}]} + T(1-\alpha)\sum_{j \ne i}{{[e_{ji}(\gammai^t) - e_{ji}(\gammai^d)}]}
    \end{gather}
    We note that the first half of the summand is equal to online effective regret for all buyers, and is equivalent to $nR_d(T;\I)$. The bias terms are upper bounded by $T\bmax$ and since total externality induced by any buyer is at most 1, the last summation term is upper bounded by $nT(1-\alpha)$. Thus, the total online welfare regret is upper-bounded by $2nT(1-\alpha) + 2T\bmax + nR_d(T;I)$. 
\end{proof}

We make a few remarks before concluding the online section. First, any horizon ($T$) dependent online algorithm for buyers can be converted to a horizon-independent version using the simple \emph{doubling trick} \cite{cesa2006prediction}. This entails running the algorithm in multiple phases, and in each phase $i$, an instance of the algorithm is executed for $T=2^i$ rounds. This has another beneficial property if we consider adjusting the intervention parameter $\alpha$ over the phases as well. Observe that re-arranging the theorem above (ignoring the bias terms) to equate $nT(1-\alpha) = nR_d(T;\I)$, yields $\alpha = \Oh\left(1 - \tfrac{R_d(T)}{T}\right)$. For any sub-linear learning algorithm, the ratio $\tfrac{R_d(T)}{T}$ is initially high but diminishes as the phases increase. In other words, for early phases when there is significant regret due to lack of learning, $\alpha$ can be small. In later phases, increasing $\alpha$ minimizes the WRaE that the dominant strategy causes. As such, the platform does not intervene strongly in early rounds when buyers know little and are simply exploring and only does so when buyers have learned. The online welfare regret under this schedule is asymptotically equivalent to the total effective regret faced by buyers. Overall, this is a nice operational property of our intervention and holds for any learning algorithm.

\subsection{Online welfare regret}
We observed in theorem \ref{theorem:wrae_offline} that agents playing their dominant strategy under the proposed transaction cost leads to $\Theta(n(1-\alpha))$ welfare regret (WRaE) upto prediction errors. Having established that buyers can adopt online learning algorithms to learn and play their dominant strategy, we now consider the analogous question in the online setting: what is the WRaE when agents play their dominant strategy under our proposed intervention using online algorithms? We prove this can be decomposed into the online effective regret discussed above and the offline WRaE proved in theorem \ref{theorem:wrae_offline}. 

\begin{theorem}\label{theorem:online_welfare_regret}
    For an instance $\I$, the online welfare regret, $R_w(T)$, is upper-bounded by: $2nT(1-\alpha) + 2T\bmax + n R_d(T;\I)$.
\end{theorem}
\begin{proof}
    Let $S^* = (\gamma_1^*, \dots, \gamma_n^*)$ be the social optimal strategy, $S^d = (\gamma_1^d, \dots, \gamma_n^d)$ the dominant strategy of each buyer, and $S^t = (\gamma_1^t, \dots, \gamma_n^t)$ the strategy taken by buyers at time $t$. We can add and subtract the welfare at the dominant strategy to the online welfare regret, $R_w(T)$, and express it as:
    \begin{equation*}
        \begin{split}
            R_w(T) = \sum_{t=1}^{T}{
            \sum_{i=1}^{N} {
                \left(
                    g_i({\gammai^*}) - \textstyle\sum{e_{ij}(\gammaj^*)} - g_i(\gammai^d) + \textstyle\sum{e_{ij}(\gammaj^d)}
                \right)
                + 
                \left(
                    g_i({\gammai^d}) - \textstyle\sum{e_{ij}(\gammaj^d)} - g_i(\gammai^t) + \textstyle\sum{e_{ij}(\gammaj^t)}
                \right)
            }
        } 
        \end{split}
    \end{equation*}
    The first part of the sum, $\sum_{i=1}^{N}{g_i({\gammai^*}) - \textstyle\sum{e_{ij}(\gammaj^*)} - g_i(\gammai^d) + \textstyle\sum{e_{ij}(\gammaj^d)}}$, is exactly the WRaE quantity that we upper bounded to $n(1-\alpha) + \bmax$ in theorem \ref{theorem:wrae_offline}, which we can then sum over $T$. Next, observe that $e_{ji}(\gamma) = \alpha \ehat_{ji}(\gamma) + (1-\alpha) \ehat_{ji}(\gamma) + b_{ji}(\gamma)$ for any $\gamma$. Thus, we can re-arrange the latter sum as:
    \begin{gather}
        \sum_{i=1}^{N}\left[\sum_{t=1}^{T}g_i(\gammai^d) - \sum_{t=1}^{T}\sum_{j\ne i}{\alpha \ehat_{ji}(\gammai^d)} - \sum_{t=1}^{T}{g_i(\gammai^t) + \sum_{t=1}^{T}\sum_{j \ne i}{\alpha \ehat_{ji}(\gammai^t)}}\right] \\ + T\sum_{j \ne i}{{[b_{ji}(\gammai^t) - b_{ji}(\gammai^d)}]} + T(1-\alpha)\sum_{j \ne i}{{[e_{ji}(\gammai^t) - e_{ji}(\gammai^d)}]}
    \end{gather}
    We note that the first half of the summand is equal to online effective regret for all buyers, and is equivalent to $nR_d(T;\I)$. The bias terms are upper bounded by $T\bmax$ and since total externality induced by any buyer is at most 1, the last summation term is upper bounded by $nT(1-\alpha)$. Thus, the total online welfare regret is upper-bounded by $2nT(1-\alpha) + 2T\bmax + nR_d(T;I)$. 
\end{proof}

We make a few remarks before concluding the online section. First, any horizon ($T$) dependent online algorithm for buyers can be converted to an horizon-independent version using the simple \emph{doubling trick} \cite{cesa2006prediction}. This entails running the algorithm in multiple phases, and in each phase $i$, an instance of the algorithm is executed for $T=2^i$ rounds. This has another beneficial property if we consider adjusting the intervention parameter $\alpha$ over the phases as well. Observe that re-arranging the theorem above (ignoring the bias terms) to equate $nT(1-\alpha) = nR_d(T;\I)$, yields $\alpha = \Oh\left(1 - \tfrac{R_d(T)}{T}\right)$. For any sub-linear learning algorithm, the ratio $\tfrac{R_d(T)}{T}$ is initially high but diminishes as the phases increase. In other words, for early phases when there is significant regret due to lack of learning, $\alpha$ can be small. In later phases, increasing $\alpha$ minimizes the WRaE that the dominant strategy causes. As such, the platform does not intervene strongly in early rounds when buyers know little and are simply exploring and only does so when buyers have learned. The online welfare regret under this schedule is asymptotically equivalent to the total effective regret faced by buyers. Overall, this is a nice operational property of our intervention and holds for any online learning algorithm that a buyer adopts.

\section{A Richer Externality Model}\label{section:rich_externality}
With a complete picture of both the learning and game dynamics, we now consider an extension by analyzing a richer externality model, termed \emph{joint externality}. Under this externality model, when buyer $i$ purchases $\gammai$ and buyer $j$ purchases $\gammaj$, the externality buyer $i$ suffers depends on both buyer $i$'s decision, $\gammai$, and $j$'s decision $\gammaj$: $E_{ij}(\gammai, \gammaj)$. To clarify notation, for two specific seller sets $\gamma^1$ and $\gamma^2$, $e_{ij}(\gamma^1, \gamma^2)$ implies buyer $i$ owning $\gamma^1$ and buyer $j$ owning $\gamma^2$. So in general, $e_{ij}(\gamma^1, \gamma^2) \ne e_{ij}(\gamma^2, \gamma^1)$, as the ownership is reversed. However, if we write $e_{ij}(\gammai, \gammaj)$ the arguments already imply which buyer owns which set, and thus notationally $e_{ij}(\gammai, \gammaj) = e_{ij}(\gammaj, \gammai)$. The total expected externality suffered by buyer $i$ from all other buyers is thus: $\sum_{j \ne i}{e_{ij}(\gammai, \gammaj)}$. Joint externality can also be represented in matrix form. For a pair $i,j$ with $i < j$, the matrix $\Es_{ij}$ tabulates the externality suffered by $i$ due to $j$ with $\Es_{ij}[\ell, k] = e_{ij}(\gammai=\gamma^\ell, \gammaj=\gamma^k)$; similarly, $\Es_{ji}[\ell, k] = e_{ji}(\gammai=\gamma^\ell, \gammaj=\gamma^k)$. 

Observe that this is a richer class of externality than previously studied and can capture more diverse settings (see appendix \ref{appendix:externality} for a detailed discussion). To our knowledge, this work is the first to explore this more general externality model in data markets. In this section, we focus on how this affects the game dynamics with and without our proposed intervention. This richer setting naturally leads to weaker guarantees, and as such we define a weaker notion of pure equilibrium. We then move on to our first pair of results that paint a bleak picture of the Nash equilibrium properties under the joint externality model without any intervention: even a good approximate equilibrium may not always exist, and even in settings where it does, the welfare maybe very poor \footnote{The worst-case instance in Proposition \ref{thm:joint_noeq_wrae} can't be captured by the independent model. Thus with no intervention, joint externality model does not aid the welfare properties of pure equilibrium.}.

\begin{definition}
    An action profile $S^q = (\gamma^q_1, \dots, \gamma^q_n)$ is an $\varepsilon$- pure Nash Equilibrium if no agent $i$ can increase their utility by more than $\varepsilon$ by unilaterally deviating from their strategy $\gamma^q_i$. When $\varepsilon = 0$, this coincides with a PNE.
\end{definition}

\begin{prop}\label{thm:joint_noeq}
    Without any intervention, there are instances wherein no $\varepsilon$-pure Nash Equilibrium exists for $\varepsilon < 1$.
\end{prop}
\begin{proof}
    Consider 2 buyers, and let $\gamma^a$ and $\gamma^b$ denote two arbitrary orders with $a \in \{1, \dots, 2^k\}$, $b \in \{1, \dots, 2^k\}$ and $a \ne b$. Then, let the externality involving $\gamma^a$ and $\gamma^b$ be as follows (note $\Es^{ab}_{12}$ is a sub-matrix of $\Es_{12}$):
    \begin{equation*}
        \Es^{ab}_{12} =
        \begin{bmatrix}
            e_{12}(\gamma_1=\gamma^a, \gamma_2=\gamma^a) = 1  &  e_{12}(\gamma_1=\gamma^a, \gamma_2=\gamma^b) = 0      \\
            e_{12}(\gamma_1=\gamma^b, \gamma_2=\gamma^a) = 0  &  e_{12}(\gamma_1=\gamma^b, \gamma_2=\gamma^b) = 1      
        \end{bmatrix}
    \end{equation*}
    \begin{equation*}
        \Es^{ab}_{21} =
        \begin{bmatrix}
            e_{21}(\gamma_1=\gamma^a, \gamma_2=\gamma^a) = 0  &  e_{21}(\gamma_1=\gamma^a, \gamma_2=\gamma^b) = 1     \\
            e_{21}(\gamma_1=\gamma^b, \gamma_2=\gamma^a) = 1  &  e_{21}(\gamma_1=\gamma^b, \gamma_2=\gamma^b) = 0      
        \end{bmatrix}
    \end{equation*}
    Let the expected externality $e_{12}$ and $e_{21}$ be 1 for any other seller set pair not included above --- that is, the full externality matrices for this example, $\Es_{12}$ and $\Es_{21}$, are rank 2. Next, let the expected net gain be as follows: $g_i(\gamma^a) = g_i(\gamma^b) = 1$ for $i \in \{1,2\}$, with the expected net gain for all other seller sets $\gamma \notin \{\gamma^a, \gamma^b\}$ being 0. This clearly implies that it is strictly dominant for both buyers to choose between $\gamma^a$ or $\gamma^b$ over any other choices as that leads to $-1$ utility. Thus, it suffices to consider the utility for both buyers involving only these states which we can express in a matrix as follows (buyer 1 is the row player, and buyer 2 is the column player):
    \begin{equation*}
        \begin{bmatrix}
            u_1(\gamma^a, \gamma^a), u_2(\gamma^a, \gamma^a) & u_1(\gamma^a, \gamma^b), u_2(\gamma^a, \gamma^b) \\
            u_1(\gamma^b, \gamma^a), u_2(\gamma^b, \gamma^a) & u_1(\gamma^b, \gamma^b), u_2(\gamma^b, \gamma^b) \\
        \end{bmatrix} = 
        \begin{bmatrix}
            0, 1 & 1, 0 \\
            1, 0 & 0, 1 \\
        \end{bmatrix}
    \end{equation*}
    It is evident that for any action profile $\gamma_1, \gamma_2$ involving $a$ and $b$, one of the buyers will benefit an amount 1 by deviating. Since any other strategy involving the buyers choosing something beyond options $a, b$ is strictly dominated also by utility 1, the statement holds.
\end{proof}

\begin{prop}\label{thm:joint_noeq_wrae}
    Without any intervention, if a joint externality instance does have pure Nash Equilibria, then the WRaE of any such equilibria can approach $\rightarrow n$.
\end{prop}
\begin{proof}
    Consider $n$ buyers, and let $\gamma_a$ and $\gamma_b$ denote two arbitrary orders with $a \in \{1, \dots, 2^k\}$, $b \in \{1, \dots, 2^k\}$ and $a \ne b$. Buyers 1 through $n-1$ share the same externality whereas buyer $n$ is different. Let the externality involving options $\gamma^a$ and $\gamma^b$ be as follows:
     \begin{equation*}
        \forall i \in \{1, n-1\}, \forall j \,\,: \Es^{ab}_{ij} =
        \begin{bmatrix}
            e_{ij}(\gamma_1=\gamma^a, \gamma_2=\gamma^a) = \tfrac{1-\varepsilon}{n-1}  &  e_{ij}(\gamma_1=\gamma^a, \gamma_2=\gamma^b) = \tfrac{\varepsilon}{n-1}      \\
            e_{ij}(\gamma_1=\gamma^b, \gamma_2=\gamma^a) = \tfrac{1-\varepsilon}{n-1}  &  e_{ij}(\gamma_1=\gamma^b, \gamma_2=\gamma^b) = \tfrac{2 \varepsilon}{n-1}      
        \end{bmatrix} 
    \end{equation*}
    \begin{equation*}
        \forall j \,\,: \Es^{ab}_{nj} =
        \begin{bmatrix}
            e_{ij}(\gamma_1=\gamma^a, \gamma_2=\gamma^a) = \tfrac{1-\varepsilon}{n-1}  &  e_{ij}(\gamma_1=\gamma^a, \gamma_2=\gamma^b) = \tfrac{1}{n-1}      \\
            e_{ij}(\gamma_1=\gamma^b, \gamma_2=\gamma^a) = \tfrac{1}{n-1}  &  e_{ij}(\gamma_1=\gamma^b, \gamma_2=\gamma^b) = 0      
        \end{bmatrix}
    \end{equation*}

    Let the expected externality for all $(i,j)$ be 1 for any other seller set pair not including either $\gamma^a$ or $\gamma^b$. Next, let the expected net gain be as follows: $g_i(\gamma^a) = g_i(\gamma^b) = 1$ for $i \in \{1,2\}$, with the expected net gain for all other seller sets $\gamma \notin \{\gamma^a, \gamma^b\}$ being 0. As such, it is strictly dominant for \emph{any} buyer to select $\gamma^a$ or $\gamma^b$ over any other since the former will lead to a utility of at least 0, whereas the latter has utility $-1$. Thus it suffices to consider strategies pertaining only to $\gamma^a$ and $\gamma^b$ for the purpose of equilibrium. Next, note that for agents $1$ through $n-1$, choosing $\gamma^a$ is in fact strictly dominant since they suffer less externality as compared to choosing $\gamma^b$, regardless of whether buyer $n$ choosing $\gamma^a$ or $\gamma^b$. Since the first $n-1$ agents will always choose $\gamma^a$ as their dominant strategy, the only pure equilibrium consists of buyer $n$ choosing $\gamma^a$ as this has strictly lower utility. Thus, all agents choosing $\gamma^a$ is the only PNE, and we note that this has social welfare $n \varepsilon$. Next, consider all buyers choosing $\gamma^b$. This leads to social welfare $1 + (1-2\varepsilon)(n-1) = n - 2\varepsilon(n-1)$. Thus as $\varepsilon \rightarrow 0$, the best case WRaE $\rightarrow n$. 
\end{proof}

These two results show that the lack of intervention is doubly bad in this richer externality model. Not only can pure equilibrium lead to maximal welfare regret, paralleling the result in theorem \ref{thrm:independent_externality_eq}, but even good approximate equilibrium may not exist in this setting. We now consider introducing our transaction cost based on predicted externalities, which in this setting is equivalent to $\mathcal{T}_i(S) = \alpha(\sum_{j \ne i}{\Ehat_{ji}(\gammai, \gammaj)} - \Ehat_{ij}(\gammai, \gammaj))$; as before $\Ehat_{ij}(\gammai, \gammaj)$ is the predicted externality, and $b_{ij}(\gammai, \gammaj)$ is the bias. We first show that under this cost, an $\varepsilon$ PNE is always possible, with $\varepsilon \rightarrow 0$ linearly as $\alpha \rightarrow 0.5$ and $\Oh(b) \rightarrow 0$, where $\Oh(b)$ denotes purely bias terms.

\begin{theorem}\label{theorem:joint_eps_pne}
    Under the proposed transaction cost, there always exists an $\varepsilon$ PNE with $\varepsilon = 2|\alpha - 0.5|\sum_{j \ne i}{\max_{\gammai, \gammaj}|\ehat_{ji}(\gamma_i, \gamma_j) - \ehat_{ij}(\gamma_i, \gamma_j)|} + \sum_{j \ne i}{\max_{\gammai, \gammaj}{|b_{ij}(\gammai, \gammaj) - b_{ji}(\gammai, \gammaj)|}}$.
\end{theorem}
\begin{proof}
    For a general instance $\I = (g_i, e_{ij}, b_{ij}, \alpha)$, first consider a simpler version of this instance, $\tilde{\I} = (\tilde{g}, \tilde{e}_{ij}, \tilde{b}_{ij}, \tilde{\alpha})$ where $\tilde{g}_i = g_i$, $\tilde{e}_{ij}(\gammai, \gammaj) = \tfrac{1}{2}\left(e_{ij}(\gammai, \gammaj) + e_{ji}(\gammai, \gammaj)\right)$, $\tilde{\alpha} = 0$, and $\tilde{b}_{ij}(\cdot, \cdot) = 0$ for all $i, j$ and possible pairs of $(\gammai, \gammaj)$. We first note that in instance $\tilde{\I}$, externality are symmetric - i.e. $\tilde{e}_{ij}(\gammai, \gammaj) = \tilde{e}_{ji}(\gammai, \gammaj) \, \forall i,j$. We now show that the dynamics of instance $\tilde{I}^s$ can be represented through a potential function $\Phi(S)$ defined as follows: $\Phi(S) = \sum_{i}{\tilde{g}_i(\gammai)} - 0.5\sum_{i}\sum_{j}{\tilde{e}_{ij}(\gammai, \gammaj)}$. Without loss of generality, for a joint action profile $S = (\gamma_1, \dots, \gamma_n)$, consider buyer $1$ changing her action to $\gamma'_1$ and leading to profile $S'$. The resulting change in expected utility for her is given by $\tilde{g}_1(\gamma_1) - \tilde{g}_1(\gamma'_1) - \left(\sum_{j \ne 1}{\tilde{e}_{1j}(\gammai,\gammaj) - \tilde{e}_{1j}(\gammai', \gammaj)}\right)$. Note that the potential function difference is given by $\Phi(S) - \Phi(S') = \tilde{g}_1(\gamma_1) - \tilde{g}_1(\gamma'_1) - 0.5\sum_{j \ne 1}{\tilde{e}_{1j}(\gamma_1, \gammaj)} - 0.5\sum_{j \ne 1}{\tilde{e}_{j1}(\gamma_1, \gammaj)} + 0.5\sum_{j \ne 1}{\tilde{e}_{1j}(\gamma'_1, \gammaj)} - 0.5\sum_{j \ne 1}{\tilde{e}_{j1}(\gamma'_1, \gammaj)}$. Since externalities are symmetric in $\tilde{I}$, this difference in potential functions is equivalent to the difference in agent 1's utility, making this a potential game, which implies game instance $\tilde{I}$ has a pure Nash Equilibrium.
    
    Let $S = (\gamma_1, \dots, \gamma_n)$ denote an action profile. Under $\tilde{\I}$, observe that utility for buyer $i$ can be expressed as $u_i(S; \tilde{\I}) = g_i(\gammai) - 0.5\sum_{j \ne i}{e_{ij}(\gammai, \gammaj)} - 0.5\sum_{j \ne i}{e_{ji}(\gammai, \gammaj)}$. Observe that for any strategy $S$, we can relate $u_i(S; \tilde{\I})$ to the utility for buyer $i$ in the general instance $\I$ under the same actions $S$ as follows: $u_i(S; \tilde{\I}) - u_i(S; \I)$
    \begin{equation}\label{equation:simple_real_equivalence}
        = (\alpha - 0.5) \sum_{j \ne i}{[\ehat_{ji}(\gammai, \gammaj) - \ehat_{ij}(\gammai, \gammaj) ] } + 0.5 \sum_{j \ne i}{b_{ij}(\gammai, \gammaj) - b_{ji}(\gammai, \gammaj)}
    \end{equation}
    where $\ehat_{ij}$ is the predicted externality under the true instance $\I$. Next, let $S^q = (\gamma_1^q, \dots, \gamma_n^q)$ denote an equilibrium profile under simplified instance $\tilde{\I}$. This implies for any buyer $i$ unilaterally deviating from $\gammai^q$ to another $\gammai$ (we denote the resulting action profile $S^q_{-i}$) is not beneficial under $\tilde{\I}$: $u_i(S^q; \tilde{I}) \geq u_i(S^q_{-i}; \tilde{I})$. Then by appealing to equation \ref{equation:simple_real_equivalence}, we state that $u_i(S^q_{-i};\I) - u_i(S^q;\I)$, the deviation benefit, is bounded by: $\leq 2|\alpha - 0.5|\sum_{j \ne i}{\max_{\gammai, \gammaj}|\ehat_{ij}(\gammai, \gammaj) - \ehat_{ji}(\gammai, \gammaj)|} + \sum_{j \ne i}{\max_{\gammai, \gammaj}|b_{ij}(\gammai, \gammaj) - b_{ji}(\gammai, \gammaj)|}$. Thus, $S^q$ is an $\varepsilon$ equilibrium for the original instance $\I$ for the states $\varepsilon$ values.
\end{proof}

\begin{cor}\label{cor:potential_best_response}
    For an instance $\I = (g_i, e_{ij}, b_{ij}, \alpha)$, let  instance $\tilde{\I} = (\tilde{g}_i, \tilde{e}_{ij}, \tilde{b}_{ij}, \tilde{\alpha})$, with $\tilde{\alpha} = 0$, $\tilde{g}_i = g_i$, $\tilde{e}_{ij}(\gammai, \gammaj) =$ $ \tfrac{1}{2}\left(e_{ij}(\gammai, \gammaj) + e_{ji}(\gammai, \gammaj)\right)$, and $\tilde{b}_{ij}(\cdot, \cdot) = 0$ for all possible values. Then for $\varepsilon = 2|\alpha - 0.5|\sum_{j \ne i}{\max_{\gammai, \gammaj}|\ehat_{ji}(\gamma_i, \gamma_j) - \ehat_{ij}(\gamma_i, \gamma_j)|} + \sum_{j \ne i}{\max_{\gammai, \gammaj}{|b_{ij}(\gammai, \gammaj) - b_{ji}(\gammai, \gammaj)|}}$, 
    $\tilde{\I}$ has a potential function whose exact PNE strategy is a $\varepsilon$ PNE of $\I$.
\end{cor}

The $\varepsilon$ in this approximate equilibrium is parameterized by the intervention parameter $\alpha$, the instance externalities, and bias $b_{ij}$. With the latter bias component of $\varepsilon$ diminishing linearly to 0 as bias gets closer to 0 or becomes symmetric, we focus on the first term. This linearly approaches 0 in two ways: either $\alpha$ approaches $0.5$, or the externalities become symmetric (i.e. $e_{ij} \approx e_{ji}$)\footnote{While it might seem that it suffices for $\hat{e}_{ij}$ to be symmetric, if the underlying externalities are not, the bias terms will not go to 0}. Overall, we find this result encouraging as it implies the maximum equilibrium deviation linearly reaches 0 either through instance property or $\alpha$ close to 0.5. Further, $\alpha \approx 0.5$ is not extreme but quite reasonable as it charges each buyer half of the predicted net externality they induce. Corollary \ref{cor:potential_best_response} implies that one can obtain this approximate equilibrium by solving a corresponding potential game, which is operationally simple \citep{roughgarden2010algorithmic}. We next study the welfare regret of the equilibrium under our transaction cost and find the results to be positive here as well.

\begin{theorem}\label{theorem:upper_bound_joint}
    For $\varepsilon$ defined in theorem \ref{theorem:joint_eps_pne}, there always exists a $\varepsilon$-PNE with WRaE is at most $\tfrac{n}{2}$.
\end{theorem}
\begin{proof}
     We consider the corresponding simpler version of instance $\I$, $\tilde{\I} = (\tilde{g}_i, \tilde{e}_{ij}, \tilde{b}_{ij}, \tilde{\alpha})$, defined in corollary \ref{cor:potential_best_response}. We note that externality in $\tilde{\I}$ is always symmetric and by corollary \ref{theorem:joint_eps_pne}, a PNE always exists for $\tilde{\I}$, which implies a $\varepsilon$-PNE for instance $\I$, where $\varepsilon$ is equal to the one in the theorem statement. Thus, it suffices to consider the WRaE of instance $\tilde{I}$ for the remainder of the proof.  
     
     Let the social optimal be $S^* = (\gamma_1^*, \dots, \gamma_n^*)$, with each buyer attaining utility $\tilde{u}_i(S^*) = \tilde{g}_i(\gammai^*) - \sum_{j \ne i}{\tilde{e}_{ij}(\gammai^*, \gammaj^*)}$ for a total welfare of $sw(S^*)$. If the social optimal is an equilibrium, then the best-case welfare regret is 0. If not, then by theorem \ref{theorem:joint_eps_pne}, we know that a pure equilibrium of $\tilde{\I}$ can be reached by playing the sequential best response (buyers play the best response one by one). Without loss of generality, suppose buyer 1 is unhappy at the social optimal and changes their decision from $\gamma_1^*$ to $\gamma_1^1$ (denotes that this is buyer 1's first best response), and define this new state as $S_{1,1}$ (denotes buyer $1$ has played their first best response). Thus: 
    \begin{equation}\label{eq:wrae_upper_br}
        \tilde{g}_1(\gamma_1^1) - \sum_{j \ne 1}{\tilde{e}_{1j}(\gamma_1^1, \gamma_j^*)} > \tilde{g}_1(\gamma_1^*) - \sum_{j \ne 1}{\tilde{e}_{1j}(\gamma_1^*, \gamma_j^*)}
    \end{equation}
    Define $\Delta \tilde{g}_i(\gamma_i^t, \gamma_i) = \tilde{g}_i(\gamma_i^t) - \tilde{g}_i(\gammai)$ and $\Delta \tilde{e}_{ij}(\gamma_i^t, \gammai, \gammaj) = \tilde{e}_{ij}(\gammai^t, \gammaj) - \tilde{e}_{ij}(\gammai, \gammaj)$. With this new notation, we can express equation \ref{eq:wrae_upper_br} as $\Delta \tilde{g}_1(\gamma_1^1, \gamma_1^*) > \sum_{j \ne 1}{\Delta \tilde{e}_{1j}(\gamma_1^1, \gamma_1^*, \gammaj^*)}$ and exploiting symmetric externality of $\tilde{I}^s$, the social optimal at $S_{1,1}$ can be succinctly expressed as $sw(S_{1,1}) = sw(S^*) + \Delta \tilde{g}_1(\gamma_1^1, \gamma_1^*) - 2\sum_{j \ne 1}{\Delta \tilde{e}_{1j}(\gamma_1^1, \gamma_1^*, \gammaj^*)}$ (since externalities are symmetric, buyer $1$'s new decision affects externality $\tilde{e}_{1j}$ and $\tilde{e}_{j1}$ equally). This relationship is in fact satisfied between any two consecutive best response steps: let state $S_{i,k}$ denote when buyer $i$ plays her $k^{th}$ best response, which occurs right after state $S_{h,\ell}$ where buyer $h$ plays her $\ell^{th}$ best response (note that not every buyer needs to update at every round). Also, for any buyer $j$, let $t_j$ represent the number of times they have played best response up to time $t$ (i.e. $t_{h} = \ell$). The following two invariants hold: 
    \begin{align}\label{eq:upper_bound_inv_1}
        \Delta \tilde{g}_i(\gammai^k, \gammai^{k-1}) > \sum_{j \ne i}{\Delta \tilde{e}_{ij}(\gammai^k, \gammai^{k-1}, \gammaj^{t_j})}\\
        sw(S_{i,k}) = sw(S_{h,\ell}) + \Delta \tilde{g}_i(\gammai^k, \gammai^{k-1}) - 2\sum_{j \ne i}{\Delta \tilde{e}_{ij}(\gammai^k, \gammai^{k-1}, \gammaj^{t_j})}
    \end{align}
     Let $S^{q} = (\gammai^q, \dots, \gamma_n^q)$ be the pure equilibrium state reached by this best response cycle starting from social optimal. By repeatedly applying the two invariants above to $sw(S^*)$ and expanding, we can relate $sw(S^q)$ to $sw(S^*)$ as follows:
    \begin{equation}\label{eq:upper_bound_eq}
        sw(S^q) = sw(S^*) + \sum_{i=1}^{n}\Delta \tilde{g}_i(\gammai^q, \gammai^*) - 2\sum_{i=1}^{n}\sum_{j \ne i}{\Delta \tilde{e}(\gammai^q, \gammai^*, \gammaj^*)}
    \end{equation}
    Observe that summing the first invariant (equation \ref{eq:upper_bound_inv_1}) across $n$ implies that:
    \begin{equation}
        \sum_{i=1}^{n}\Delta \tilde{g}_i(\gammai^q, \gammai^*) \geq \sum_{i=1}^{n}\sum_{j \ne i}{\Delta \tilde{e}_{ij}(\gammai^q, \gammai^*)}
    \end{equation} 
    Let $\sum_{i}\sum_{j \ne i}{\Delta \tilde{e}_{ij}(\gammai^e, \gammai^*)} \triangleq c$, and note that $c \in [0,n]$ due to the boundedness of $\sum_{ij}{\tilde{e}_{ij}}$. Thus we have: $\sum_{i=1}^{n}\Delta \tilde{g}_i(\gammai^e, \gammai^*) \geq c \implies sw(S^*) \leq n-c$ since maximum net gain for any buyer is $1$. Thus, $\text{WRaE} \leq n-c$. Next, observe by equation \ref{eq:upper_bound_eq}, $sw(S^q) \geq sw(S^*) + c - 2c = sw(S^*) - c \implies \text{WRaE} \leq c$. Combining these, we have that $\text{WRaE} = \min(c, n-c) \leq \tfrac{n}{2}$.    
\end{proof}

To summarize these results, joint externality is a richer model where the picture without any intervention is even more grim. Theorem \ref{thm:joint_noeq}, shows the existence of instances where even $\varepsilon$-PNE does not exist for $\varepsilon < 1$, the maximum possible utility. Theorem \ref{thm:joint_noeq_wrae} shows even if PNE exists, the welfare regret is maximal. These reinforce the idea that equilibrium properties without intervention are poor. We next consider introducing our parameterized transaction cost to this setting which leads to significant improvements. We show in theorem \ref{theorem:joint_eps_pne} that a $\varepsilon$-PNE always exists, where $\varepsilon \rightarrow 0$ linearly as $\Oh(b) \rightarrow 0$ and $\alpha \rightarrow 0.5$. We also show in theorem \ref{theorem:upper_bound_joint} that the quality of the equilibrium achieved here is significantly better, with welfare regret less than $n/2$. Note that this WRaE roughly matches the tight WRaE bound for the independent externality setting (theorem \ref{theorem:wrae_offline}) with $\alpha=0.5$. This suggests that $\alpha$ close to $0.5$ is a silver bullet: it imposes a reasonable cost, charging each buyer half of the net externality they induce, while having favorable equilibrium properties regardless of the externality model. 

In Appendix \ref{app:Experiments} we experimentally validate our findings on the positive impact of our proposed transaction cost across both externality models. On a dataset derived from AWS Data Exchange, we show significant increase in social welfare with increasing $\alpha$ for the standard externality model (Figure 1); crucially even a small $\alpha$ leads to a meaningful increase, with nearly maximal improvements achieved already at $\alpha=0.6$. In the joint externality setting (Figure 2), we see social welfare improvements concentrated at $\alpha=0.5$ as expected, but meaningful increase at other values as well. These results show that a range of $\alpha$ may lead to improved welfare, a significant operational advantage.

\section{Discussion}\label{sec:discussion}
Despite recent research on new market structures for data, real-world data markets remain far simpler: sellers post fixed prices and buyers are unfettered in their purchases. Our work fills a gap in the literature by studying this simple market under the unique characteristics of data: imperfect valuations, free replicability, and negative externality. The presence of externality allows us to naturally model buyer interactions as a simultaneous game. While this game has poor equilibrium properties by itself, a simple transaction cost can greatly improve these characteristics. For a standard externality model, our intervention is nearly perfect, guaranteeing a dominant strategy for buyers and leading to near-optimal social welfare. This intervention also fares well in the more realistic setting where buyers learn valuations through repeated interaction. We prove that buyers can learn to play their dominant strategy, while still achieving low social welfare regret. 
Lastly, we analyze an extension of this model by considering a richer class of externality. Although the equilibrium guarantees we can provide in this richer setting are naturally weaker, our proposed transaction cost significantly improves upon the status quo here as well.

Our work illustrates that when coupled with simple interventions, fixed-price data markets can be an elegant solution for a challenging product. It also leaves open several intriguing questions. We analyze a single transaction cost; it is unclear whether this is optimal or even what the space of such interventions is. Optimizing for non-utilitarian notions of welfare, like egalitarianism or Nash welfare is also an interesting research direction \citep{moulin2004fair}. Our intervention leads to platforms learning or eliciting buyer externalities since transaction cost is a simple sum of these quantities. Designing truthful mechanisms or prediction algorithms for this is an important research question \cite{chen2020truthful, kong2020information, zohar2008mechanisms}. Similarly, understanding the long-term strategic perspective of sellers and their externality within this game will be insightful. Lastly, extending the joint externality model to the online setting with buyers playing based on learned valuations remains an interesting open problem. This is spiritually similar to work on learning and regret minimization in repeated games \citep{cesa2006prediction, slivkins2019introduction}.

\bibliography{arxiv/bibliography}

\newpage
\appendix
\section{Appendix}
\subsection{Discussion on externality models}\label{appendix:externality}

The externality model we primarily focus on, which we also denote as independent externality, is the model considered in related literature \citep{aseff2008optimal, li2019facility, agarwal2020towards}. While simple, we think it is an appropriate and important model for many real-world settings. We give two different examples. Consider two firms acquiring data to improve their in-house models. The two firms are competitors and each firm’s utility depends on the relative performance of their model with respect to their competitor. In this case, $g_i(\gamma_i)$ can represent the accuracy of firm $i$’s model due to data purchases $\gamma_i$, and $e_{ij}$ can simply be buyer $i$'s approximation for the improvement achieved by their competitor $j$ due to their data purchase $\gammaj$: $\hat{g}_j(\gamma_j)$. Note this externality is independent of $i$'s decision, and firm $i$’s utility (without transaction cost) is then $u_i(\gamma_i, \gamma_j) = g_i(\gamma_i) - \hat{g}_j(\gamma_j)$. This is called additively separable externalities in \citep{ agarwal2020towards}, and we note that this is a common scenario when firms purchase data. 

This model can also capture indirect competition. Consider two trading firms $C_a$ and $C_b$ that trade different financial products. $C_a$ focuses on equities and $C_b$ focuses on Forex. They acquire data to improve their respective performance, and are targeting different data sources since they operate on different markets and products. However, they are competitors with respect to recruiting talent and securing new investments. If $C_b$ purchases data it deems very prescient, it still negatively affects $C_a$, since a more profitable $C_b$ can better compete in the talent and investment pool it shares with $C_a$. Thus $C_b$’s action induces negative externality on $C_a$, independent of the $C_a$'s data purchase decision (the gain from which is encapsulated in $g_i$). 
        
The joint externality model is indeed more general and can capture the above examples. In these examples, the independent model can be seen as capturing indirect competition. However, it is possible that buyers might be competitors (and exert externality) purely with respect to data. Consider two firms in the same consumer industry, with consumer preferences or market data being sold. If both firm purchase the same data, they target the same group of consumers and eat into each other's profit. If the data they purchase don't overlap, they each target their respective groups and stay out of each other's way. In this setting, externality suffered by buyer $i$ depends on both of their actions.

\section{Section \ref{section:learned_valuations} Proofs}\label{app:learn}

\subsection{Proof of Lemma \ref{lemma:concentration}}
\begin{proof}
    Consider a buyer $i$ and a seller set $\gamma$. If it is not in the active set or has never been selected, the lemma trivially holds since $n_i^t(\gamma) = 0$, implying $c_t^i(\gamma) > 2$. If $\gamma$ has been selected at least once, denote the $\ell^{th}$ realizations by $y^{\ell}_{i}$, which consists of the sampled net gain and externality-based transaction cost. For the purpose of this analysis, we care only about the following realized parameters: $y^{\ell}_{i} = (g^{\ell}_i(\gamma), \ehat^{\ell}_{1i}(\gamma), \ehat^{\ell}_{2i}(\gamma), \dots, \ehat^{\ell}_{ni}(\gamma))$. Next, we can write the empirical mean of the effective utility from the first $h$ times $\gamma$ has been selected by buyer $i$ as $\udihat(\gamma; h) = \udihat(y^1_i, \dots, y_i^h) = \tfrac{1}{h}\sum_{\ell=1}^{h}\left({g^\ell_i(\gamma) - \ab\sum_{j \ne i}{\ehat^{\ell}_{ji}(\gamma)}}\right)$. Observe that $\udihat(y^1_i, \dots, y_i^h)$ satisfies the bounded difference property: $\forall \ell \in [1, \dots, h], \forall y^{\ell}_i, \, \sup_{y'^{\ell}_i \in [0,1]^n}{|\udihat(\dots, y^{\ell}_i, \dots) - \udihat(\dots, y'^{\ell}_i, \dots)|} \leq \tfrac{2}{h}$. This allows us to apply McDiarmid's inequality on the function $\udihat$ over the random samples our $h$ selections of $\gamma$ results in, denoted by $Y^1_i, \dots, Y^{h}_i$. We note that $\E[\udihat(Y_1, \dots, Y_h)] = g_i(\gamma) - \ab \sum_{j \ne i}{\ehat_{ji}(\gamma)} = \udi(\gamma)$. Thus, by McDiarmid's inequality, we have: 
    \begin{equation}
        \forall i, \forall \gamma, \forall h \,\, \mathbb{P}\left[|\udihat({\gamma}, h) - \udi(\gamma)| \leq \sqrt{\tfrac{12\log(T)}{h + 1}}\right] \geq 1 - \tfrac{2}{T^6} 
    \end{equation}
    Thus for all $h$, where $h$ is the number of times buyer $i$ selects $\gamma$, we have the bad event (no concentration) probability is $\tfrac{2}{T^6}$. Since $h \leq t$ and $n_t^i(\gamma) \leq t$, we can apply a union bound over all possible $h$ and arrive at:
    \begin{equation}
        \forall i, \forall \gamma, \forall t \,\, \mathbb{P}\left[|\udihat(\gamma, n_t^i(\gamma)) - \udi(\gamma)| \leq \sqrt{\tfrac{12\log(t)}{n_t^i(\gamma) + 1}}\right] \geq 1 - \tfrac{2}{T^5} 
    \end{equation}
    By applying a union bound over all $t \in [1, \dots, T]$, we have that an event defined by $\mathcal{E}_i(\gamma) = \left\{ \forall t \, |\udihat(\gamma, n_t^i(\gamma)) - \udi(\gamma)| \leq \sqrt{\tfrac{12\log(t)}{n_t^i(\gamma) + 1}} \right\}$ holds with probability greater than $1 - \tfrac{2}{T^4}$.
    We would like to now show that for each buyer, this holds uniformly for all possible seller sets, and not just for a fixed $\gamma$. Since the total number of seller sets is exponential, a naive application of union bound provides a poor bound. However, since any inactive seller set trivially satisfies the bound with probability 1, it suffices to consider only the active seller sets for a buyer $i$, $\mathcal{A}_i$. While this set's size is bounded since we add at most one arm every round, it is random in its composition. For $j \in [1, \dots, t]$, let $Z_{i}^j$ denote the $j^{th}$ arm activated by buyer $i$. $Z_i^{j}$ is a random variable and $\{Z_i^{1}, \dots, Z_i^{t}\}$ is the set of all activated arms\footnote{If the number of activated arms is less than $t$, then for $j > |\mathcal{A}_i|$, let $Z_{i}^{j}$ be the last arm activated}. For a seller set $\gamma$, note that the event $\{Z_i^{j} = \gamma \}$ depends on the outcome of \emph{previously activated sets}, whereas $\mathcal{E}_i(Z_i^{j})$ is purely based on the observations derived from the seller set $Z_i^{j}$, whatever that happens to be. In other words, the event $\{Z_i^{j} = \gamma\}$ is independent of $\mathcal{E}_i(Z_i^{j})$. Thus we have that $\forall i, \forall Z_i^{j}$, the clean event holds for the $Z_i^{j}$ activated arm with the following probability:
    \begin{equation}
        \Pb\left[\mathcal{E}_i(Z_i^{j})\right] = \sum_{\gamma}{\Pb[\mathcal{E}_i( Z_i^{j})|Z_i^{j}=\gamma]\Pb[Z_i^{j}=\gamma]} = \sum_{\gamma}{\Pb[\mathcal{E}_i(\gamma)]\Pb[Z_i^{j}=\gamma]} \leq 1 - \tfrac{2}{T^4}
    \end{equation}
    where we note that $\Pb[\mathcal{E}_i(\gamma)]$ is a constant and can be taken outside the sum. Thus, we have a concentration result for each active seller set. Now we apply a union bound over the whole active set $\mathcal{A}_i$. Noting that $|\mathcal{A}_i| \leq T$, we have that: $\forall i, \mathbb{P}\left[\forall \, \gamma \in \mathcal{A}_i \,, \mathcal{E}_i(\gamma)\right]  \geq 1 - \tfrac{2}{T^3}$. Lastly, note that the event $\mathcal{E} = \bigwedge_{i=1}^{n}{\mathcal{E}_i}$, and thus we apply a union bound over all the buyers. Assuming that the number of buyers is smaller than $T$, we arrive at the statement of the lemma.
\end{proof}

\subsection{Proof of Lemma \ref{lemma:slivkins_lemma}}
\begin{proof}
    Fix any buyer $i$, any seller set $\gamma$, and a time $t$. If $\gamma$ is not in the active set or never selected, then this claim holds trivially since $c_t^{i}(\gamma) > 2$, allowing us to consider only active arms. Suppose seller set $\gamma$ was last chosen at some time $s \leq t$. Now consider the optimal dominant strategy for buyer $i$, $\gammai^d$, and the two possibilities at time $s$: (1) Either $\gammai^d$ is already part of the active set, or (2) $\gammai^d$ is covered by some other set $\gammai' \in \mathcal{A}_i$ which has confidence radius $c^i_s(\gammai') \geq \tfrac{1}{k}$ (the closest element to $\gammai'$ has to be at least $\tfrac{1}{k}$ away). Then the following holds at time $s \leq t$ for each case:
    \begin{equation}
        \begin{split}
            \text{if (1): } \text{UCB}_i(\gamma) \geq \text{UCB}_i(\gammai^d) = \udihat(\gammai^d, n_t^i(s)) + 2c^i_s(\gammai^d) \geq \udi(\gammai^d)\\
            \text{if (2): } \text{UCB}_i(\gamma) \geq \text{UCB}_i(\gammai') = \udihat(\gammai', n_t^i(s)) + 2c_{s}^{i}(\gammai') \geq  \udi(\gammai') + c_{s}^{i}(\gammai') \geq \udi(\gammai^d)
        \end{split}
    \end{equation}
    where the last inequality in (1) follows from lemma \ref{lemma:concentration} and the last inequality in (2) follows from the fact that $\gammai^d$ is covered by $\gammai'$ (i.e. $D_h(\gammai', \gammai^d) \leq c^i_t(\gammai')$) and thus their difference in utility is equivalently bounded by the closeness property. The following upper-bound for $\text{UCB}_i(\gamma)$ holds regardless:
    \begin{equation}
         \text{UCB}_i(\gamma) = \udihat(\gamma, n_t^i(s)) + 2c^{i}_{s}(\gamma) \leq \udi(\gamma) + 3c^{i}_{s}(\gamma) = \udi(\gamma) + 3\cit(\gamma)
    \end{equation}
    where we use that fact that $c_{s}^{i}(\gamma) = \cit(\gamma)$ since $s$ is the last time $\gamma$ was selected. Putting the upper and lower bounds on $\text{UCB}_i$ together, we have:
    \begin{equation}
         \udi(\gamma) + 3\cit(\gamma) \geq \text{UCB}_i(\gamma) \geq \text{UCB}_i(\gamma') \text{ or }\text{UCB}_i(\gamma^*) \geq \udi(\gammai^*) \implies \Delta_i(\gamma) \leq 3\cit(\gamma)
    \end{equation}
    
    We now move to the second part of the lemma. First note that by property of the Hamming space, any two active seller sets (in fact any two seller sets) must be at least $\tfrac{1}{k}$ apart. Consider two active choices $\gamma^1$ and $\gamma^2$ for buyer $i$, and suppose $\gamma^1$ was activated (at time step $t_1$) before $\gamma^2$ (activated at time-step $t_2$). The fact that $\gamma^2$ was activated implies that it was not covered by $\gamma^1$'s confidence radius at $t_2$, and thus $D_h(\gamma^1, \gamma^2) > c_{t_2}^i(\gamma^1) \geq \tfrac{\Delta_i(\gamma^1)}{3}$. If $\gamma^2$ was activated before $\gamma^1$, we get the opposite result. Combining the two, we have the $D_h(\gamma^1, \gamma^2) \geq \tfrac{1}{3}\min(\Delta(\gamma^1), \Delta(\gamma^2))$. Putting this and the Hamming observation of two elements being at least $\tfrac{1}{k}$ apart, we have the desired result.
\end{proof}

\newpage
\section{Experimental Results}\label{app:Experiments}
We experimentally validate our theoretical insights on the effectiveness of transaction costs. While no publicly available dataset exists for data markets with prices and utilities, we take inspiration from AWS marketplace (a fixed price data market platform) we design a suite of synthetic experiments. We take the 10 different data categories from AWS, and instantiate several data sellers for each category (177 total sellers). For each category, we have several buyers (57 total), where each buyer has zero gain/externality for the sellers \emph{not} in their category, with gains and externalities for their category sampled uniformly. They may purchase from up to 10\% of the sellers in their category. This is meant to model the pricing and budget constraints of buyers. 

\begin{figure}[htb]\label{fig:standard_externality}
\centering
\begin{minipage}{0.8\textwidth}
\centering
\begin{tikzpicture}
  \centering
  \node (img)  {\includegraphics[scale=0.5]{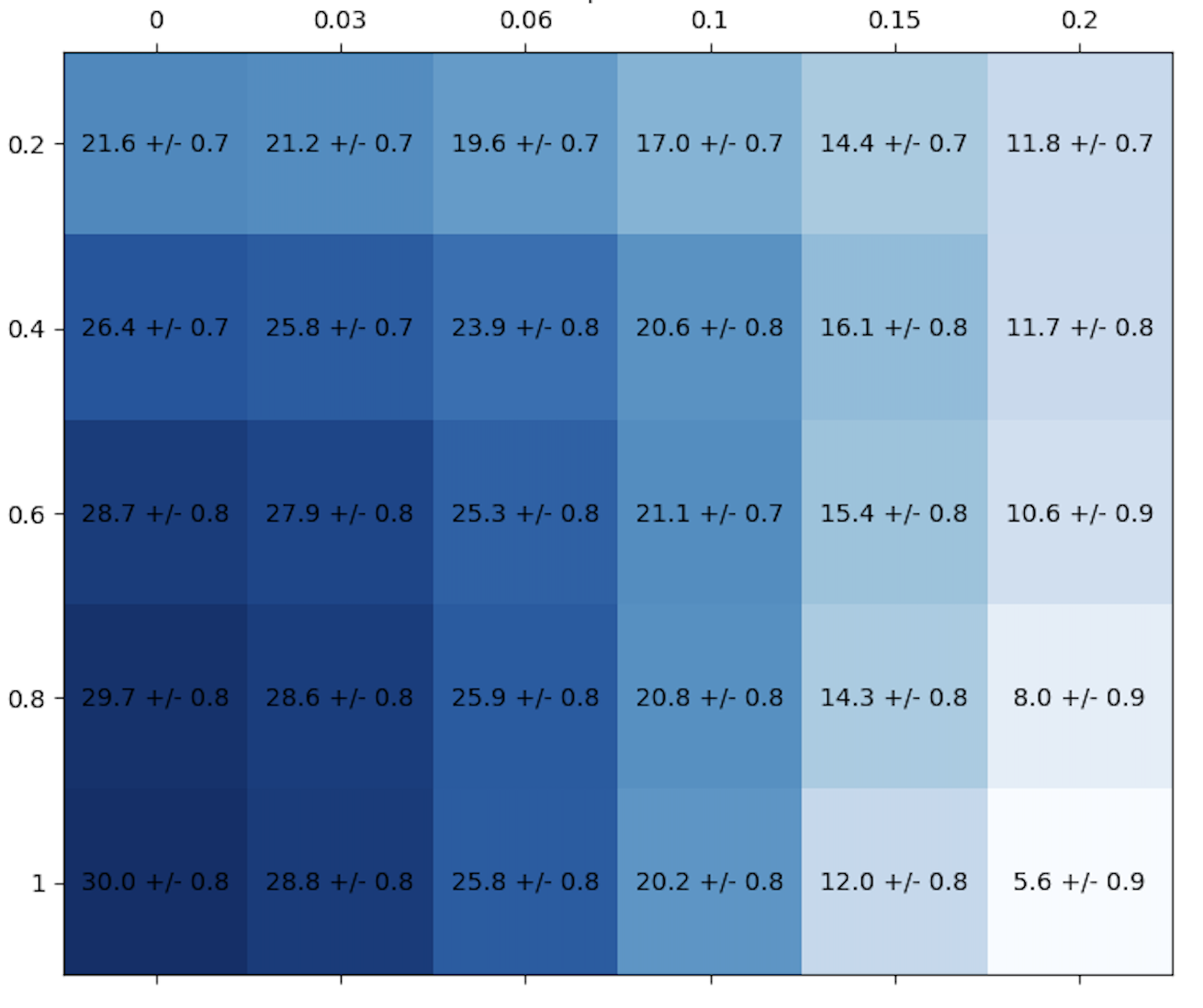}};
  \node[above=of img, node distance=0cm, yshift=-1cm] {\large{Bias}};
  \node[left=of img, node distance=0cm, rotate=90, anchor=center,yshift=-0.7cm,font=] {\large{Alpha}};
 \end{tikzpicture}
 \vspace{-1em}
\caption{Standard Externality Model: Avg \% increase in social welfare (with 90\% confidence interval) from gain maximizing decision.}
\end{minipage}%
\end{figure}

We first consider the standard externality model, with results presented in Figure 1. We use $\alpha=0$ (no intervention) as the baseline which corresponds to buyers picking seller sets with the highest possible gain. We measure the increase in social welfare from this baseline to the social welfare at equilibrium under varying $\alpha$ and bias parameters (denoted by epsilon). As expected, our plots show that increasing $\alpha$ leads to increased social welfare. Also as expected, as the bias of the platform’s externality estimate increases, welfare decreases, with the effect more pronounced as $\alpha$ increases. This is also reasonable since enforcing large transaction costs with very inaccurate predictions is inadvisable. Fortunately, the results show that $\alpha=0.6$ can capture most of the increase in welfare, without suffering too much when then the externality estimates by the platform are inaccurate.

\begin{figure}[htb]\label{fig:joint_externality}
\centering
\begin{minipage}{0.8\textwidth}
\centering
\begin{tikzpicture}
  \centering
  \node (img)  {\includegraphics[scale=0.50]{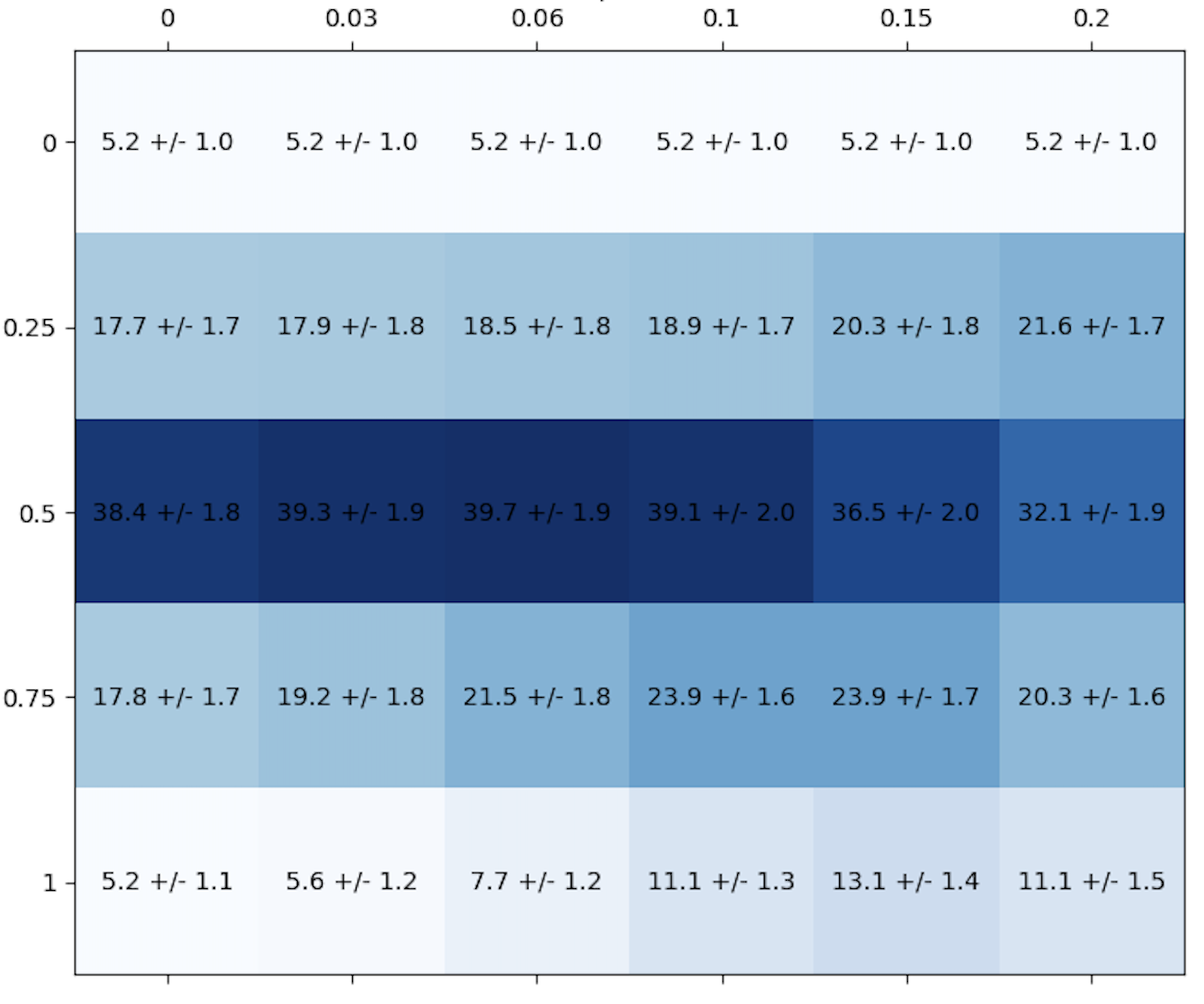}};
  \node[above=of img, node distance=0cm, yshift=-1cm] {\large{Bias}};
  \node[left=of img, node distance=0cm, rotate=90, anchor=center,yshift=-0.7cm,font=] {\large{Alpha}};
 \end{tikzpicture}
 \vspace{-1em}
\caption{Joint Externality Model: Avg \% increase in social welfare (with 90\% confidence interval) from gain maximizing decision.}
\end{minipage}%
\end{figure}

For the joint externality setting, whose results are presented in Figure 2, there is no dominant strategy, even without any intervention. As such, we use the gain maximizing choice from before as the baseline (which isn’t necessarily an equilibrium). Our theoretical results show with intervention the resulting game is an approximate potential game with an approximate equilibrium. As such, we use a best response algorithm to find the approximate equilibrium. Once again, we plot the increase in social welfare (sum of all buyer utilities) from the baseline for different values of $\alpha$ and bias. As expected from our theoretical results, setting $\alpha$ closer to 0.5 from either direction leads to a higher increase in social welfare. The results are quite symmetric. We also notice that the effect of bias is less clear here than in the standard case.

Operationally, these results corroborate our assertion that $\alpha$ close to 0.5 is the ideal parameter. In the standard setting, it offers much of the welfare benefits of setting it higher without any of the drawbacks due to bias, and in the joint setting, it is outright the best parameter.

\end{document}